\documentclass[
  aip,
  jcp,
  letterpaper,
  amsmath,
  amssymb,
  reprint,
  longbibliography,
  floatfix
]{revtex4-2}

\usepackage{mathtools,bm,amsthm}
\usepackage{booktabs}
\usepackage{array}
\usepackage{graphicx}
\usepackage[nopatch=footnote]{microtype}
\usepackage[capitalise]{cleveref}
\crefname{equation}{Eq.}{Eqs.}
\crefname{figure}{Fig.}{Figs.}
\crefname{section}{Sec.}{Secs.}
\Crefname{equation}{Equation}{Equations}
\Crefname{figure}{Figure}{Figures}
\Crefname{section}{Section}{Sections}

\begingroup\fontfamily{cmss}\selectfont\endgroup
\DeclareFontShape{OT1}{cmss}{bx}{it}{<->ssub*cmss/bx/n}{}

\newtheorem{theorem}{Theorem}
\newtheorem{proposition}[theorem]{Proposition}
\newtheorem{lemma}[theorem]{Lemma}
\newtheorem{corollary}[theorem]{Corollary}
\newtheorem{remark}[theorem]{Remark}

\newcommand{\cB}{\mathcal B}
\newcommand{\cC}{\mathcal C}
\newcommand{\cF}{\mathcal F}

\newcommand{\cH}{\mathcal H}
\newcommand{\cL}{\mathcal L}
\newcommand{\cM}{\mathcal M}
\newcommand{\cP}{\mathcal P}
\newcommand{\cQ}{\mathcal Q}

\newcommand{\cV}{\mathcal V}
\newcommand{\cW}{\mathcal W}
\newcommand{\Tr}{\operatorname{Tr}}

\newcommand{\spec}{\operatorname{spec}}

\newcommand{\fall}[2]{\left(#1\right)_{\underline{#2}}}
\newcommand{\id}{\mathbb I}
\newcommand{\R}{\mathbb R}
\newcommand{\C}{\mathbb C}

\newcommand{\pN}{{\pi_2^{[N]}}}
\newcommand{\pNadj}{(\pi_2^{[N]})^*}
\newcommand{\SigmaN}{{\Sigma_q^{[N]}}}
\newcommand{\coneN}{{\cC_{2,q}^{[N]}}}

\begin{document}

\title{Spurious zero-dispersion continua in full $(2,q)$-positivity: A linear hierarchy-depth lower bound from an exactly solvable counting operator}

\author{Maho Nakata}
\email{maho@riken.jp}
\affiliation{Computational Condensed Matter Physics Laboratory, RIKEN Pioneering Research Institute, 2-1 Hirosawa, Wako, Saitama 351-0198, Japan}
\date{August 30, 2026}

\begin{abstract}
Variational two-particle reduced-density-matrix methods impose hierarchies of semidefinite constraints. We ask whether a fixed level of such a hierarchy can recover even an elementary many-fermion spectrum, using the dispersion test, in which a value is accepted as an eigenvalue when the minimized dispersion vanishes. For an exactly solvable operator that counts fermions in one half of the spin orbitals, whose spectrum consists of integers, we determine the complete set of accepted values at every level of the full hierarchy of particle--hole semidefinite constraints. Besides the integers, this set contains an entire interval of noninteger values unless the hierarchy level grows linearly with the particle number, and we give the exact threshold. The proof constructs explicit pseudo reduced density matrices that are diagonal in the occupation basis, obtained by continuing the occupation probabilities of a fixed-particle-number state to a noninteger particle number. Classical semidefinite constraints on occupation correlations, supplemented by nonnegative occupation-pattern probabilities, suffice to establish fermionic positivity. For this counting model the two relaxations have the same dispersion zero set, not necessarily the same feasible states, and a converse argument shows that no other false eigenvalues occur. The limitation is specific to the hierarchy: two-body inequalities that encode the integer spectrum would remove every false value but are not generated at a fixed level. We also verify that the full semidefinite constraints and their complete projection onto two-particle reduced-density-matrix space are equivalent in finite dimensions, so that the result applies equally to the polar formulation.
\end{abstract}

\keywords{$N$-representability, reduced density matrices, semidefinite programming, excited states, sum-of-squares hierarchy}

\maketitle

\section{Introduction}

The fermionic $N$-representability problem asks which low-order reduced
density matrices (RDMs) arise from an $N$-particle state
\cite{Coleman1963,Erdahl1978}.  A sensitive test of an approximate
representability condition is the \emph{approximate spectrum} of a one-body
operator $h$, introduced by Erdahl and Grudzi\'nski
\cite{ErdahlGrudzinski1978}: minimize the dispersion $(h-\lambda\id)^2$ over
the approximate set of 2-RDMs and accept $\lambda$ as an eigenvalue when the
minimum vanishes.  With exact representability the accepted values are the
eigenvalues of $h$; with an outer approximation the set can be larger.
Erdahl already recorded that for the number operator on a Fock space with
$r$ orbitals it becomes $\{0,r\}\cup[1,r-1]$
\cite[Sec.~6, Theorem~6.1]{Erdahl1978}, a continuum of false eigenvalues.

We determine this set exactly at every level of the complete hierarchy of
positivity conditions for a counting operator $K$ at fixed particle number
$N$: the number of fermions in one half of $2N$ spin orbitals.  Its spectrum
is $\{0,1,\ldots,N\}$.  Full $q$-positivity, that is, positive
semidefiniteness of all particle--hole metric blocks up to order $q$, such
as $D_3,E_3,F_3,Q_3$ at $q=3$, and the equivalent full $(2,q)$-positivity
accept exactly the values
\[
 \{0,\ldots,q-2\}\cup[q-1,N-q+1]\cup\{N-q+2,\ldots,N\}
\]
(\cref{thm:main}).  The central interval contains noninteger values
whenever $N\ge2q-1$.  Recovering the spectrum therefore requires
$q\ge\lceil(N+2)/2\rceil$, a level linear in $N$, although the exact
spectrum is known without any computation.

The proof constructs explicit pseudo reduced density matrices.  They are
diagonal in the occupation basis, and their entries are the occupation
probabilities of a state with a fixed number of electrons in one half of
the orbitals, continued to a noninteger particle number $\lambda$.  For
$N=5$ and $q=3$, a diagonal 2-RDM specified by three occupation and
pair-occupation values passes the $D$, $Q$, $G$, $T_1$, $T_2$, and all
other $(2,3)$ conditions while giving zero dispersion at $\lambda=5/2$.
These pseudo-RDMs are the fermionic counterpart of a known classical
construction used for sum-of-squares lower bounds on problems with $0/1$
variables \cite{Lasserre2001,Grigoriev2001,Lee2016,Potechin2019}.  We show
that classical positivity of occupation-number polynomials of degree
$q-1$, together with nonnegative occupation-pattern probabilities on at
most $q$ orbitals, suffices for full fermionic $q$-positivity.  For this
model these conditions give the same dispersion zero set, without asserting
equality of feasible sets (Sec.~\ref{sec:relation}).  What is new is this
transfer to all particle--hole blocks, the converse that
no other values are accepted, and the resulting exact zero set and
recovery threshold.

The limitation belongs to the hierarchy, not to the 2-RDM as such.  The
integer spectrum implies the two-body inequalities
$(K-j)(K-j-1)\ge0$; adding them would remove every false value, but for
$q-1\le j\le N-q$ they are not generated by the sums of squares of level
$q$ (Sec.~\ref{sec:integrality}).  Likewise, the result concerns the zero
set only: even when the zero set is correct, the minimized dispersion can
be too low (Sec.~\ref{sec:energy}).

This work continues a line of variational 2-RDM calculations.  Nakata
\emph{et al.} solved the $P$, $Q$, $G$ problem as a standard semidefinite
program \cite{Nakata2001,Nakata2002}; Zhao \emph{et al.} added the
three-index $T_1$ and $T_2$ conditions in a dual formulation
\cite{Zhao2004,Fukuda2007}; and later calculations used the generalized
$T_2'$ condition \cite{Braams2007,Nakata2008}.  In 2006 Nakata \emph{et al.}
returned to the dispersion test and found simple Hamiltonians for which
PQG$T_1T_2$ gives drastic non-representable solutions, although the exact
state can be a single determinant \cite{Nakata2006}.  Failures under
complete three-positivity are also known: Li, Liebenthal, and DePrince
showed that substantial energy errors in two- and three-dimensional
hydrogen clusters persist with full three-particle conditions (3POS)
\cite{Li2021}.  Thus failure beyond selected conditions is not new.
Here we study an analytically solvable counting model within the complete
hierarchy systematized by Mazziotti \cite{Mazziotti2012PRL,Mazziotti2012PRA}.
Our contribution is positivity of every particle--hole block at arbitrary
$q$, a necessity proof determining the entire dispersion zero set, and the
sharp hierarchy level at which the integer spectrum is recovered.

The $(2,q)$ conditions have a longer history.  In Erdahl's formulation
\cite{Erdahl1978}, conditions on the 2-RDM are two-body operators that are
nonnegative on every $N$-particle state, and natural examples are squares
$y^\dagger y$ that lie in the two-body space.  Squares of degree-two operators give $P$, $Q$, and $G$.
Erdahl also exhibited cancellation of the highest-degree terms in
symmetrized cubic squares \cite[Sec.~8]{Erdahl1978}, particular $(2,3)$
conditions in present terminology.  Zhao
\emph{et al.} formulated $T_1$ and $T_2$ as explicit matrix inequalities and
noted that they are implicit in Erdahl's analysis \cite{Zhao2004}; the
generalized $T_2'$ is due to Braams, Percus, and Zhao \cite{Braams2007}.
Mazziotti systematized the idea, allowing conic combinations of squares at
any degree $q$ whose higher-body terms cancel, into the $(2,q)$ hierarchy
\cite{Mazziotti2012PRL,Mazziotti2012PRA}, and full three-positivity with
all four blocks $D_3,E_3,F_3,Q_3$ was implemented in
Ref.~\onlinecite{Mazziotti2006}.  The convex geometry is also older: Erdahl
describes representable RDMs as projections of the positive-semidefinite
cone and their polar as positive two-body operators \cite{Erdahl1978};
Mazziotti and Erdahl systematized higher particle--hole conditions
\cite{MazziottiErdahl2001}; and Erdahl and Jin formulated truncated
positivity as positive semidefiniteness of a moment matrix with fermionic
consistency relations, together with a dual sum-of-squares representation
\cite{ErdahlJin2000Chapter,ErdahlJin2000}.  These are antecedents of the
structures used here, not new principles.

Section~II defines the hierarchy and shows that its primal and polar forms
give the same 2-RDMs.  Section~III introduces the counting model.
Section~IV constructs the pseudo-RDMs and checks them explicitly at $q=3$.
Section~V proves the exact zero set for every $q$.  Section~VI discusses
the scope of the result and its relation to classical occupation-number
relaxations.

\section{Fermionic positivity hierarchies}

\subsection{Notation and hierarchy convention}

The notation is collected in Appendix~\ref{app:notation}.  In particular,
$r$ is the number of spin orbitals, $N$ the fixed number of electrons, and
$q$ the positivity level, counting creation and annihilation factors in a
test row.  A superscript $[N]$ indicates the fixed-$N$ space, not the body
order of an operator.  We display this dependence on cones and feasible
sets, but suppress it on operator spaces and Gram maps: throughout,
$\cV_q$, $\cW_{\le q}$, and $\cM_{q,\alpha}$ refer to the same fixed $N$.

Thus $q=5$ means full five-body positivity.  A typical $D_5$ entry contains
five creation and five annihilation operators, but $q=5$ does \emph{not} mean
that the system has ten spin orbitals.  Our smallest balanced $q=5$ example
has $N=9$ fermions in $r=18$ spin orbitals, for which the $D_5$ block alone
has dimension $\binom{18}{5}=8568$.

``Full'' means that all particle--hole blocks through level $q$ and, in the
polar form, unrestricted positive-semidefinite multipliers are retained.  It
does not denote a selected subset such as $T_1$, $T_2$, tensor-factorized
subfamilies, or approximate $(2,4)$ conditions.

\subsection{CAR algebra and truncated moments}

Let the one-particle space be $\C^r$, with fermionic creation and annihilation
operators satisfying
\[
 \{a_i,a_j^\dagger\}=\delta_{ij},\qquad
 \{a_i,a_j\}=0,\qquad
 \{a_i^\dagger,a_j^\dagger\}=0.
\]
We work in the fixed-$N$ sector $\cH_N=\bigwedge^N\C^r$ with projector $P_N$
on Fock space $\mathcal F$.  A word is a product of creation and
annihilation operators, and its \emph{length} is the number of factors.
Let $\cW_{\le q}$ denote the span of the operators $CP_N:\cH_N\to\mathcal F$
with $C$ a word of length at most $q$.  For a word that changes the particle
number we mean $CP_N$, not $P_NCP_N$, which would vanish; in either case
$C^\dagger C$ is an operator on $\cH_N$.

A normalized Hermitian truncated functional $\cL$ is called
\emph{full positive through order $q$} if
\begin{equation}
 \cL(C^\dagger C)\ge0
 \qquad\text{for every }C\in\cW_{\le q},
 \label{eq:qpos}
\end{equation}
and if it obeys the fixed-particle-number consistency relations through the
available degree.  After splitting by net particle-number change and
quotienting operator directions that vanish on $\cH_N$, this is positivity
of all particle--hole metric blocks through order $q$; at order three the
new blocks are $D_3,E_3,F_3,Q_3$.

For $0<N<r$ and $1\le q\le N$, the lower-order conditions follow from the
complete set of $q$th-order conditions.  If $D$ is a word polynomial that
changes the particle number by $\Delta$, then on $\cH_N$
\begin{equation}
 \begin{aligned}
 \sum_{i=1}^r (a_iD)^\dagger(a_iD)&=(N+\Delta)\,D^\dagger D,\\
 \sum_{i=1}^r (a_i^\dagger D)^\dagger(a_i^\dagger D)&=(r-N-\Delta)\,D^\dagger D,
 \end{aligned}
 \label{eq:fixed-N-lifting}
\end{equation}
and at least one coefficient is positive, so every lower-order square is a
conic combination of squares one operator longer.  Thus our cumulative
convention is equivalent to conventional full $q$-positivity, represented
by the $q+1$ exact-order particle--hole blocks.  We denote the resulting
normalized feasible set by $\cF_q^{\mathrm{full},[N]}$, and its restriction to
the zero-, one-, and two-body operator space $\cV_2$ by
\[
 \pN:\cF_q^{\mathrm{full},[N]}\longrightarrow\cV_2^* .
\]
For a functional $\cL$ in this set we write $\langle O\rangle=\cL(O)$.
Its two-body part $D_2=\pN(\cL)$ consists of the normalization, the
1-RDM, and the 2-RDM.  It is useful to distinguish two opposite operations:
\emph{contraction} retains reduced data from a state, whereas
\emph{lifting} makes a low-body operator act on the many-electron space.
For a physical density operator $\rho_N$ on $\cH_N$, the two directions are
\[
 \rho_N\ \xrightarrow{\ \text{contraction}\ }\ (1,D_1,D_2),
\]
\[
 (h_0,h,v)\ \xrightarrow{\ \text{lifting}\ }\ H^{[N]}.
\]
For example, with antisymmetrized two-electron coefficients $v_{ij,kl}$,
\begin{equation}
 H^{[N]}=\left.
 \left(h_0\id+\sum_{ij}h_{ij}a_i^\dagger a_j
 +\frac14\sum_{ijkl}v_{ij,kl}a_i^\dagger a_j^\dagger a_l a_k\right)
 \right|_{\cH_N}.
 \label{eq:operator-lifting}
\end{equation}
The superscript does not make $H$ an $N$-body interaction: it still has
body order at most two.  With $D_{1;i,j}=\Tr(\rho_Na_i^\dagger a_j)$ and
$D_{2;ij,kl}=\Tr(\rho_Na_i^\dagger a_j^\dagger a_l a_k)$, the same energy is
\begin{align}
 \Tr_{\cH_N}(\rho_N H^{[N]})
 &=h_0+\sum_{ij}h_{ij}D_{1;i,j}\nonumber\\
 &\quad+\frac14\sum_{ijkl}v_{ij,kl}D_{2;ij,kl}
 \equiv\Tr[HD_2].
 \label{eq:contraction-lifting}
\end{align}
The rightmost notation is the RDM contraction used throughout; it does not
assert the existence of a positive $\rho_N$ for a pseudo-RDM.  Lifting is
not an inverse reconstruction of a density matrix.  Below, $H$ abbreviates
$H^{[N]}$ when the space is clear; $\pN$, $\SigmaN$, and $\coneN$
retain the explicit $[N]$ marker to emphasize their dependence on the
fixed-particle-number identities.

\subsection{Equivalence of full $(2,q)$-positivity and full $q$-positivity}
\label{sec:2q}

Mazziotti's $(2,q)$ conditions \cite{Mazziotti2012PRL,Mazziotti2012PRA} are
inequalities on the 2-RDM obtained from sums of squares of $q$-body
operators in which all three- and higher-body parts cancel exactly, so that
only a two-body operator remains.  Examples at increasing $q$ are as
follows.
\begin{itemize}
\item \emph{$q=2$.} The $D$, $Q$, and $G$ conditions are single squares
$C^\dagger C$ with $C$ a linear combination of $a_ia_j$, $a_i^\dagger
a_j^\dagger$, or $a_i^\dagger a_j$; no cancellation is needed.
\item \emph{$q=3$.} For $C=\sum b_{ijk}a_ia_ja_k$ the three-body terms of
$C^\dagger C$ and $CC^\dagger$ cancel in the sum, which gives $T_1$; the same
construction with $C=\sum b_{ijk}a_i^\dagger a_ja_k$ gives $T_2$, and adding
a one-body part gives $T_2'$ \cite{Zhao2004,Braams2007,Mazziotti2006}.
Mazziotti also derived lifted 2-positivity conditions, in which a $D$, $Q$,
or $G$ square is enlarged by an extra creation or annihilation operator and
the two choices are added so that the three-body parts cancel; these are
not implied by $D$, $Q$, $G$, $T_1$, and $T_2$
\cite[Eq.~(54)]{Mazziotti2012PRA}.
\item \emph{$q=4,5,6$.} Ref.~\cite{Mazziotti2012PRA} lists 3, 6, and 24
classes, respectively, of lifted and pure conditions, obtained with
tensor-factorized coefficients such as $b_{ij}d_ke_l$ in place of a general
$b_{ijkl}$.
\end{itemize}
By the bipolar theorem \cite{Kummer1967,Erdahl1978}, $N$-representable
2-RDMs are characterized by nonnegativity on such two-body operators, and
Mazziotti showed that at a sufficiently high level, conditions with complete
higher-body cancellation characterize $N$-representability exactly.  At a
fixed level $q$, three statements must be distinguished.  (i) The classes
written out in Refs.~\cite{Mazziotti2012PRL,Mazziotti2012PRA} are a subset of
all $(2,q)$ conditions, so imposing only them is weaker.  (ii) Every 2-RDM
contracted from a full $q$-positive $q$-RDM satisfies every $(2,q)$
condition, because a sum of squares has a nonnegative expectation value;
this is immediate.  (iii) Conversely, every 2-RDM satisfying all $(2,q)$
conditions extends to a full $q$-positive $q$-RDM.  Statement~(iii) is not
addressed in Refs.~\cite{Mazziotti2012PRL,Mazziotti2012PRA}; it is
\cref{prop:projection} below, and together with (ii) it makes the two
formulations equivalent.

To make this precise, let $\SigmaN$ be the cone generated by all squares
$C^\dagger C$ with $C\in\cW_{\le q}$, represented in the truncated CAR
quotient; by \cref{eq:fixed-N-lifting} the generators may be taken at exact
length $q$.  The cone of $(2,q)$ conditions is
\begin{equation}
 \coneN
 =
 \left\{
 B_2\in\cV_2:
 \pNadj B_2\in\SigmaN
 \right\},
 \label{eq:cancellationcone}
\end{equation}
where $\pNadj$ is the adjoint of contraction: it embeds the two-body
operator into the retained operator space, with fixed-$N$ identities
understood.  This is the truncated-space counterpart of the lifting in
\cref{eq:operator-lifting}, not a map that reconstructs an $N$-RDM.
In a $q$-particle tensor representation the embedding is written
$B_2\wedge I^{q-2}$ up to normalization; $q$ is the retained RDM order,
not the physical electron number $N$.  In particle--hole metric form the membership condition reads
\begin{equation}
 \pNadj B_2
 =
 \sum_{\alpha=0}^{q}\cM_{q,\alpha}^*(X_\alpha)+L^*z,
 \qquad X_\alpha\succeq0,
 \label{eq:maz-eq}
\end{equation}
with $z$ free.  Here $L$ collects the homogeneous linear identities $Ly=0$
of unnormalized moment vectors $y$, such as $y(\widehat N)-N\,y(\id)=0$, so
that $L^*z$ is an operator that vanishes on $\cH_N$; it is absent if these
identities are factored out from the start (Appendix~\ref{app:raw}).
Equation \eqref{eq:maz-eq} is the finite-dimensional form of the
operator-cancellation equation of Refs.~\cite{Mazziotti2012PRA,Mazziotti2023}.
The polar moment cone, its normalized slice, and the polar energy of a
two-body Hamiltonian $H$ are
\begin{align*}
 \cP_{2,q}^{\mathrm{polar},[N]}&=\coneN^*,\\
 \cF_{2,q}^{\mathrm{polar},[N]}
 &=\left\{D_2\in\cP_{2,q}^{\mathrm{polar},[N]}:D_2(\id)=1\right\}.
\end{align*}
\begin{equation}
 E_{(2,q)}(H)
 =\min_{D_2\in\cF_{2,q}^{\mathrm{polar},[N]}}
 \Tr\!\left[HD_2\right].
 \label{eq:polar-energy-definition}
\end{equation}
Writing $\cF_{2,q}^{\mathrm{listed},[N]}$ for the 2-RDMs that satisfy only the
listed classes, statements (i) and (ii) read
\begin{equation}
 \pN\!\left(\cF_q^{\mathrm{full},[N]}\right)
 \subseteq
 \cF_{2,q}^{\mathrm{polar},[N]}
 \subseteq
 \cF_{2,q}^{\mathrm{listed},[N]} ,
 \label{eq:polar-inclusions}
\end{equation}
so that $E^{\mathrm{listed}}_{(2,q)}(H)\le E_{(2,q)}(H)\le E_q(H)$; a
stronger condition admits fewer 2-RDMs and gives a higher bound.
Statement~(iii) turns the first inclusion into an equality, that is, it
excludes a duality gap.

\begin{proposition}[Projection--polar equivalence]
\label{prop:projection}
In a finite orbital basis, after quotienting CAR and fixed-$N$ operator
identities and removing structural null directions,
\begin{equation}
 \boxed{
 \cF_{2,q}^{\mathrm{polar},[N]}
 =
 \pN\!\left(\cF_q^{\mathrm{full},[N]}\right).
 }
 \label{eq:projection}
\end{equation}
Consequently, for every two-body Hamiltonian $H$,
\begin{equation}
 E_{(2,q)}(H)
 =
 E_q(H)
 :=\min_{D_2\in\pN(\cF_q^{\mathrm{full},[N]})}\Tr\!\left[HD_2\right].
 \label{eq:energy-equality}
\end{equation}
\end{proposition}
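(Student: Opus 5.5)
The inclusion $\pN(\cF_q^{\mathrm{full},[N]})\subseteq\cF_{2,q}^{\mathrm{polar},[N]}$ is statement~(ii) of \cref{eq:polar-inclusions}. If $\cL$ is full positive through order $q$ and $B_2\in\coneN$, then $\pNadj B_2\in\SigmaN$. Hence $\pN(\cL)(B_2)=\cL(\pNadj B_2)\ge0$, because $\cL$ is nonnegative on every generator $C^\dagger C$ and therefore on the cone $\SigmaN$. Normalization is preserved, since $\id\in\cV_2$. The work lies in the reverse inclusion, and I would prove it by conic duality in the finite-dimensional quotient space.

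Let $\cV_q$ be the retained operator space after quotienting CAR and fixed-$N$ identities, and let $K_q:=\SigmaN\subset\cV_q$. I would show that $K_q$ is a closed convex cone. It is the image of the product of PSD cones $\prod_\alpha\{X_\alpha\succeq0\}$ under the linear map $\sum_\alpha\cM_{q,\alpha}^*$ of \cref{eq:maz-eq}, after dividing out the $L^*z$ directions. By definition, the unnormalized full $q$-positive functionals form the dual cone $K_q^*\subset\cV_q^*$. The projection $\pN$ is the adjoint of the embedding $\iota=\pNadj:\cV_2\to\cV_q$, and $\coneN=\iota^{-1}(K_q)$.

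The standard identity for a closed cone $K$ and a linear map $\iota$ is
\[
\bigl(\iota^{-1}(K)\bigr)^*=\overline{\iota^*(K^*)} .
\]
Applied here it gives $\cP_{2,q}^{\mathrm{polar},[N]}=\overline{\pN(K_q^*)}$. It therefore remains to show that $\pN(K_q^*)$ is already closed, and then to pass to the normalized slices.

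Closedness is the main obstacle, and I would attack it with a compact-base argument. The element $\id\in\cV_2$ maps to an interior point of $K_q$: on $\cH_N$, summing the squares $C^\dagger C$ over an orthonormal family of words of exact length $q$ gives a positive multiple of the identity, by repeated use of \cref{eq:fixed-N-lifting}. After removing the structural null directions, meaning the operators that vanish on all of $\cH_N$ and so are not detected by any functional, this makes $\id$ an order unit. Every $\cL\in K_q^*$ with $\cL(\id)=0$ then vanishes on a generating set and is zero. Consequently $\{\cL\in K_q^*:\cL(\id)=1\}=\cF_q^{\mathrm{full},[N]}$ is compact. A cone over a compact base that avoids the origin has closed linear image, so $\pN(K_q^*)$ is closed.

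Restricting to $D_2(\id)=1$ then yields \cref{eq:projection}. For \cref{eq:energy-equality}, both sides minimize the same linear functional $D_2\mapsto\Tr[HD_2]$ over the same compact set, so the two minima coincide and are attained.

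The delicate point is the bookkeeping of the quotient. The claim that $\id$ lies in the interior of $K_q$ relative to the quotient space needs exactly the phrase ``removing structural null directions'' in the statement. Without that step, some degenerate directions could make the image of $K_q^*$ fail to be closed. I would verify this interior-point claim explicitly, using the identities \cref{eq:fixed-N-lifting} with $0<N<r$ and $q\le N$.
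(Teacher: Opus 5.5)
Your route is the paper's own. The identity $(\iota^{-1}K)^*=\overline{\iota^*(K^*)}$ followed by closedness of $\pN(K_q^*)$ is exactly the chain \cref{eq:bipolar-chain}. Your compact-base argument, that a functional in $K_q^*$ with $\cL(\id)=0$ vanishes, is the paper's trace bound $\Tr D_p=\binom Np\,y(\id)$ in other words.

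The gap is step (b) of that chain. The identity you quote holds for a closed cone, so you need $K_q=\SigmaN$ to be closed. Without this you only get $\coneN^*\supseteq\overline{\pN(K_q^*)}$, which is the trivial inclusion. Your justification, that $K_q$ is the image of a product of PSD cones under $\sum_\alpha\cM_{q,\alpha}^*$, does not establish closedness: linear images of PSD cones need not be closed. The paper raises exactly this caveat and gives a counterexample in its Supplementary Material. You have therefore put the difficulty in the wrong place. You treat closedness of the sum-of-squares cone as automatic, and flag the projected moment cone instead, which your compact base handles easily.

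There are two ways to close the gap.

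\emph{The paper's fix: strict feasibility on the moment side.} Choose linearly independent rows $C_{\alpha i}P_N$ in each sector; this is where the paper actually removes the rows vanishing on $\cH_N$. The completely mixed state $\rho_*=\id/\dim\cH_N$ then has positive definite Gram blocks, with smallest eigenvalue $g>0$. For a convergent sequence $\cM^*(X_n)$ this gives $\Tr(\rho_*\cM^*(X_n))\ge g\sum_\alpha\Tr X_{n,\alpha}$. The multipliers are therefore bounded, and a subsequence converges to a PSD limit.

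\emph{Using your own interior-point fact.} This makes closedness of $K_q$ unnecessary. Since $\overline{K_q}+\operatorname{int}K_q\subseteq\operatorname{int}K_q$, any $B_2$ with $\iota(B_2)\in\overline{K_q}$ satisfies $B_2+t\id\in\coneN$ for all $t>0$. Hence $\iota^{-1}(\overline{K_q})=\overline{\coneN}$. A cone and its closure have the same dual, so $\coneN^*=(\iota^{-1}\overline{K_q})^*=\overline{\pN(K_q^*)}$.

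This second fix needs $\id$ in the interior of $K_q$, not merely in $K_q$. Your observation that summing squares gives a multiple of the identity shows only membership. Interiority follows from your order-unit argument: $\cL(\id)>0$ on $K_q^*\setminus\{0\}$ places $\id$ in $\operatorname{int}\overline{K_q}=\operatorname{int}K_q$.

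A smaller point concerns your compact-base step. A linear image of a cone over a compact base that avoids the origin can still fail to be closed. What saves you is that the image of the base also avoids the origin: since $\id\in\cV_2$, one has $\pN(\cL)(\id)=1$ on the base. The paper uses this explicitly in step (d).
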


\begin{proof}[Outline of the proof]
Contracting any full $q$-positive moment assignment gives two-body data
satisfying every $(2,q)$ condition: each condition is a sum of nonnegative
squares.  For the converse, the normalized full $q$-positive set is compact,
because its positive semidefinite RDM blocks have fixed traces.  Its
contracted two-body set is therefore compact and convex.  If some
normalized $D_2$ lies outside it, a two-body Hamiltonian separates $D_2$
from that set.  Subtracting the separating energy times the identity gives
a two-body operator $B_2$ whose expectation is negative at $D_2$ but
nonnegative for every full $q$-positive assignment.  The cone of sums of
$q$-level squares is closed in this fixed-$N$ setting; separation of that
cone then shows that $B_2$ is itself a $(2,q)$ condition.  Thus $D_2$
violates one of the assumed conditions, a contradiction.
Appendix~\ref{app:polar} supplies the boundedness and closedness arguments
and the two separation steps.  Closedness is not an automatic property of
semidefinite projections; an example of failure outside this setting is
given in the Supplementary Material.
\end{proof}

The proposition is a duality-gap check, not a new representability
condition.  Equal optimal values also say nothing about computational cost:
in dense arithmetic both formulations carry blocks of linear dimension
$O(r^q)$, but the polar form may permit low-rank or otherwise structured
multipliers \cite{Mazziotti2023}, and such structure is where a
computational advantage would have to come from.

\section{A solvable counting model}
\label{sec:counting}

Let $r=2m$, divide the orbitals into disjoint sets
\[
 A=\{1,\ldots,m\},\qquad B=\{m+1,\ldots,2m\},
\]
and fix $N=m$.  This is a half-filled system of two degenerate shells of
equal size: $A$ and $B$ carry no energy scale of their own, and the only
structure is which shell each fermion occupies.  Define
\begin{equation}
 K=\sum_{i\in A}n_i,\qquad n_i=a_i^\dagger a_i.
 \label{eq:counting}
\end{equation}
$K$ counts the fermions in shell $A$.  It commutes with the total number
operator and with every permutation of orbitals within $A$ and within $B$;
this symmetry is what makes the relaxation exactly solvable.  The operator
should be compared with the total number operator studied on Fock space in
Refs.~\cite{Erdahl1978,ErdahlGrudzinski1978}: there the particle number
itself is the variable, whereas here the total $N$ is fixed and $K$ probes a
subsystem occupation.  Every determinant contains an integer number $k$ of
particles in $A$, so
\begin{equation}
 \spec(K|_{\cH_m})=\{0,1,\ldots,m\}.
 \label{eq:exact-spectrum}
\end{equation}
For $\lambda\in\R$ consider
\begin{equation}
 H_\lambda=(K-\lambda\id)^2.
 \label{eq:dispersion}
\end{equation}
Since $K$ is one-body, $H_\lambda$ is at most two-body.  The exact ground
energy is
\begin{equation}
 E_{\mathrm{exact}}(\lambda)
 =
 \min_{k\in\{0,\ldots,m\}}(k-\lambda)^2,
 \label{eq:exact-energy}
\end{equation}
and its zero set is exactly the integer spectrum.  This is the dispersion
test in the sense of Refs.~\cite{ErdahlGrudzinski1978,Nakata2006}: as a
function of $\lambda$, the exact minimum is the squared distance from
$\lambda$ to the nearest integer, and it vanishes precisely at the
eigenvalues of $K$.

Let
\begin{equation}
 \begin{aligned}
 E_q(\lambda)
 &=
 \min_{D_2\in\pN(\cF_q^{\mathrm{full},[N]})}\Tr\!\left[H_\lambda D_2\right],\\
 Z_q(K)&=\{\lambda:E_q(\lambda)=0\}.
 \end{aligned}
 \label{eq:relaxed-zero}
\end{equation}
By \cref{prop:projection},
\begin{equation}
 E_{(2,q)}(\lambda)
 :=E_{(2,q)}(H_\lambda)
 =E_q(\lambda).
 \label{eq:equal-counting-energies}
\end{equation}
Thus $Z_q(K)$ is the approximate spectrum of $K$ relative to level $q$, in
the terminology of Ref.~\cite{Erdahl1978}, and the same set is obtained from
the primal and the polar formulation.  Two elementary bounds frame the
problem.  First, every physical eigenstate of $K$ defines a feasible
functional, so $E_q(\lambda)\le E_{\mathrm{exact}}(\lambda)$ and
$\{0,1,\ldots,m\}\subseteq Z_q(K)$: the relaxation can only add spurious
zeros, never remove physical ones.  Second,
$H_\lambda=C^\dagger C$ with $C=K-\lambda\id\in\cW_{\le2}$, so the
level-two condition alone, a square of a one-body operator, gives $E_q(\lambda)\ge0$ for every
$q\ge2$.  The relaxed dispersion is therefore never negative, and the only
possible failure of the spectral-zero criterion is that it reaches zero at
a noninteger $\lambda$; the question is exactly where this happens.
(Positive dispersion values can also be underestimated, but that does not
affect the zero set; see Sec.~\ref{sec:energy}.)  (If one prefers rows with exactly
two creation and annihilation operators, use $\id=\widehat N/N$ on the fixed-$N$ sector;
then $C=K-(\lambda/N)\widehat N$ is an ordinary one-body operator.)

The two directions of \cref{prop:projection} enter differently.  The
elementary inclusion
\[
 \pN(\cF_q^{\mathrm{full},[N]})\subseteq
 \cF_{2,q}^{\mathrm{polar},[N]}
\]
gives $E_{(2,q)}(\lambda)\le E_q(\lambda)$, and
$H_\lambda=C^\dagger C$ is itself a $(2,q)$ condition, so
$E_{(2,q)}(\lambda)\ge0$.  Hence
\[
 0\le E_{(2,q)}(\lambda)\le E_q(\lambda),
\]
and every relaxed zero of full $q$-positivity is a relaxed zero of full
$(2,q)$-positivity.  This is all that is needed for the existence of false
zeros and for the linear lower bound on the hierarchy depth.  The converse
inclusion, the substantive part of \cref{prop:projection}, is used only to
show that full $(2,q)$-positivity has no further zeros, that is, that the
two zero sets coincide exactly and that the recovery threshold is sharp
for $(2,q)$ as well.

\section{Pseudo reduced density matrices and the case $q=3$}
\label{sec:pseudo}

\subsection{Pseudo reduced density matrices with zero dispersion}

To prove that a noninteger $\lambda$ is a relaxed zero at level $q$, we
construct by hand pseudo reduced density matrices that pass every level-$q$
condition and give zero dispersion.  Write
${}^1\!D_{ij}=\langle a_i^\dagger a_j\rangle_\lambda$ and
${}^2\!D_{ij,kl}=\langle a_i^\dagger a_j^\dagger a_la_k\rangle_\lambda$, where
$\langle\cdot\rangle_\lambda$ denotes the pseudo expectation value to be
constructed on the fixed-$N$ space.  Let $\rho_k$ be the equal mixture of
determinants with $k$ electrons in $A$ and $m-k$ in $B$.  Define
\begin{align}
 \langle O\rangle_\lambda:=\widehat{\mathcal L}_\lambda(O)
 &=\sum_{k=0}^{m}\ell_k(\lambda)\Tr(\rho_k O),\nonumber\\
 \ell_k(\lambda)
 &=\prod_{\substack{j=0\\j\ne k}}^{m}\frac{\lambda-j}{k-j}.
 \label{eq:fixedN-interpolation}
\end{align}
This signed interpolation is normalized and obeys every fixed-$N$ operator
identity, since each $\rho_k$ has exactly $N=m$ electrons.  Its weights
need not be positive.  We set $\Tr[OD]=\langle O\rangle_\lambda$ to obtain
the pseudo-RDMs.  In the $N$-electron space, $n_i^2=n_i$ and
$n_in_j=a_i^\dagger a_j^\dagger a_ja_i$ give
\begin{equation}
 \Tr\!\left[(K-\lambda)^2D_2\right]
 =
 (1-2\lambda)\sum_{i\in A}{}^1\!D_{ii}
 +\sum_{\substack{i,j\in A\\ i\ne j}}{}^2\!D_{ij,ij}
 +\lambda^2 .
 \label{eq:dispersion-trace}
\end{equation}
The goal is to make the right-hand side vanish at a noninteger $\lambda$
while all blocks, for $q=3$ the blocks $D_2,Q_2,G_2,D_3,E_3,F_3,Q_3$,
remain positive semidefinite.

For $m=5$, a physical state with two electrons uniformly distributed in
$A$ has mean shell occupation $2$, pair occupation $1/10$, and zero
dispersion.  An equal mixture of the $k=2$ and $k=3$ ensembles has mean
$5/2$, pair occupation $1/5$, and dispersion $1/4$.  The signed
interpolation at $\lambda=5/2$ instead has pair occupation $3/16$ and
zero dispersion.  No positive mixture of determinants can have a
noninteger mean with zero variance; the question is which positivity
levels detect this change.

The construction follows two rules.  First, all entries that change the
occupation of some orbital are set to zero.  Because $K$ is diagonal in the
occupation basis, no such coherences are needed; formally this is an
average over the phases of all orbitals, \cref{eq:twirl} in
Appendix~\ref{app:general}.  Every remaining entry is then the
expectation of a product of particle factors $n_i$ and hole factors
$1-n_j$.

Second, these expectations are copied from a genuine state and then
continued in the particle number.  If exactly $k$ of the $m$ orbitals of a
shell are occupied, uniformly at random, then
\begin{gather*}
 \langle n_i\rangle=\frac km,\qquad
 \langle n_in_j\rangle=\frac{k(k-1)}{m(m-1)},\\
 \langle n_i(1-n_j)\rangle=\frac{k(m-k)}{m(m-1)},\\
 \langle n_in_jn_l\rangle=\frac{k(k-1)(k-2)}{m(m-1)(m-2)},\\
 \langle n_in_j(1-n_l)\rangle=\frac{k(k-1)(m-k)}{m(m-1)(m-2)},
\end{gather*}
the last two being diagonal entries of $D_3$ and of $E_3$ (row
$a_l^\dagger a_ja_i$).  Each particle factor contributes the next of
$k,k-1,\ldots$, each hole factor the next of $m-k,m-k-1,\ldots$.  These are
polynomials in $k$; we replace $k$ by $\lambda$ in shell $A$ and by
$m-\lambda$ in shell $B$.  For $m=5$ and $\lambda=5/2$ the two three-orbital
values are $1/32$ and $5/32$, summing to the pair occupation $3/16$ as
probabilities would.  At $\lambda=3/2$, however,
$\langle n_in_jn_l\rangle_\lambda=-1/160$, a negative entry of $D_3$; this is
why level three rejects $\lambda=3/2$.  In general, with
$\fall{z}{s}=z(z-1)\cdots(z-s+1)$,
\begin{equation}
 \begin{aligned}
 &\Bigl\langle\prod_{i\in P}n_i\prod_{j\in H}(1-n_j)\Bigr\rangle_\lambda\\
 &\quad=
 \frac{\fall{\lambda}{p_A}\fall{m-\lambda}{h_A}}{\fall{m}{p_A+h_A}}\,
 \frac{\fall{m-\lambda}{p_B}\fall{\lambda}{h_B}}{\fall{m}{p_B+h_B}},
 \end{aligned}
 \label{eq:pseudo-values}
\end{equation}
where $P$ and $H$ are disjoint sets of occupied and empty orbitals,
$p_A=|P\cap A|$, $h_A=|H\cap A|$, and similarly for $B$.
This evaluation formula holds for $|P|+|H|\le m$: the corresponding
expectation in $\rho_k$ is a polynomial in $k$ of degree at most
$|P|+|H|$, so interpolation at $k=0,\ldots,m$ reproduces it exactly.
The product expression can be extended algebraically to higher degrees,
but that auxiliary extension need not satisfy fixed-$N$ identities and
is not the definition of our fixed-$N$ pseudo-state.

Every entry of a level-$q$ block uses occupation degree at most $q\le m$.
The sufficient classical positivity argument also uses squares of degree
at most $q-1$, hence moments through $2q-2\le m$ whenever the central
interval is nonempty.  Thus the product formula covers all required
positivity tests, while fixed-$N$ consistency holds exactly by
\cref{eq:fixedN-interpolation}.

\subsection{Block structure at arbitrary order}
\label{sec:arbitrary-order}

The general block structure is the link between the occupation formula
and complete fermionic positivity.  After the phase average,
every positivity block splits into sub-blocks labeled by the set $U$ of
orbitals from which electrons are removed and the set $V$ of orbitals to
which electrons are added.  A sub-block has the rows $a_V^\dagger a_U\,n_P$,
where $n_P$ is a product of occupation numbers of further orbitals and
$|U|+|V|+2|P|\le q$; each occupation number $n_i=a_i^\dagger a_i$ costs two
operators.  Normal ordering gives this form, and rows with different
$(U,V)$ have zero mutual entries under the phase average.  A sub-block's
entries are occupation-pattern expectations
\[
 \Bigl\langle\prod_{i\in U}n_i\prod_{j\in V}(1-n_j)\,
 n_Pn_{P'}\Bigr\rangle_\lambda .
\]
If no further occupation numbers fit into the rows, the sub-block is a
single probability for a pattern of occupied and empty orbitals on at most
$q$ orbitals, a product of falling factorials of $\lambda$ and $m-\lambda$
of order at most $q$.  For noninteger $\lambda$ these are all nonnegative
exactly when
\[
 q-1\le\lambda\le m-q+1 .
\]
Otherwise the quadratic form of the sub-block is the pseudo-expectation of
the square of an occupation-number polynomial of degree at most $q-1$, so
it is nonnegative whenever all occupation-correlation matrices of order
$q-1$, with rows that are products of at most $q-1$ occupation numbers, are
positive semidefinite.  For the pseudo-state this is a known property: it
holds on the same interval $q-1\le\lambda\le m-q+1$
\cite{Grigoriev2001,Lee2016,Potechin2019} (\cref{lem:slice-positive}; a
self-contained proof is given in Appendix~\ref{app:slice}).  For $q=3$ the
two cases are the diagonal entries and the matrices of
Sec.~\ref{sec:q3-check} below.  The matrix conditions hold throughout
this interval and can hold beyond it; the occupation-pattern probabilities
supply the tighter restriction.  Appendix~\ref{app:general} gives the
formal sub-block lemma and the general sufficiency proof.

\subsection{Explicit RDM blocks}

Every entry of a retained positivity block is a value
$\langle C^\dagger C'\rangle_\lambda$ for two
row operators $C$ and $C'$ (products of creation and annihilation
operators), which vanishes unless $C$ and $C'$ change the occupation of
each orbital in the same way, and otherwise equals $\pm$ a value of
\cref{eq:pseudo-values}.  Explicitly, the 1-RDM and the two-body blocks are
\begin{align}
 {}^1\!D_{ij}&=\delta_{ij}\langle n_i\rangle_\lambda,\nonumber\\
 (D_2)_{ij,kl}
 &=\langle a_i^\dagger a_j^\dagger a_la_k\rangle_\lambda
 =(\delta_{ik}\delta_{jl}-\delta_{il}\delta_{jk})
 \langle n_in_j\rangle_\lambda,\nonumber\\
 (Q_2)_{ij,kl}
 &=\langle a_ia_ja_l^\dagger a_k^\dagger\rangle_\lambda\nonumber\\
 &=(\delta_{ik}\delta_{jl}-\delta_{il}\delta_{jk})
 \langle(1-n_i)(1-n_j)\rangle_\lambda,\nonumber\\
 (G_2)_{ij,kl}
 &=\langle a_i^\dagger a_ja_l^\dagger a_k\rangle_\lambda\nonumber\\
 &=\delta_{ij}\delta_{kl}\langle n_in_k\rangle_\lambda
 +\delta_{ik}\delta_{jl}(1-\delta_{ij})
 \langle n_i(1-n_j)\rangle_\lambda,
 \label{eq:pseudo-blocks}
\end{align}
for $i\ne j$ in $D_2$ and $Q_2$, with $\langle n_in_i\rangle_\lambda
=\langle n_i\rangle_\lambda$ in $G_2$.  At third order we
write $E_3$ and $F_3$ as the matrices $\langle C^\dagger C'\rangle$ for the
row operators $a_k^\dagger a_ja_i$ and $a_i^\dagger a_j^\dagger a_k$, and $\delta^{i'j'k'}_{ijk}$ for the
generalized Kronecker delta, equal to the sign of the permutation taking
$(i,j,k)$ to $(i',j',k')$ and zero otherwise.  For distinct $i,j,k$ and
distinct $i',j',k'$,
\begin{align}
 (D_3)_{ijk,i'j'k'}
 &=\langle a_i^\dagger a_j^\dagger a_k^\dagger a_{k'}a_{j'}a_{i'}\rangle_\lambda
 =\delta^{i'j'k'}_{ijk}\langle n_in_jn_k\rangle_\lambda,\nonumber\\
 (Q_3)_{ijk,i'j'k'}
 &=\langle a_ia_ja_ka_{k'}^\dagger a_{j'}^\dagger a_{i'}^\dagger\rangle_\lambda
 \nonumber\\
 &=\delta^{i'j'k'}_{ijk}\langle(1-n_i)(1-n_j)(1-n_k)\rangle_\lambda,
 \nonumber\\
 (E_3)_{ijk,i'j'k'}
 &=\bigl\langle(a_k^\dagger a_ja_i)^\dagger a_{k'}^\dagger a_{j'}a_{i'}
 \bigr\rangle_\lambda\nonumber\\
 &=\delta^{i'j'}_{ij}\delta_{kk'}\langle n_in_j(1-n_k)\rangle_\lambda,
 \nonumber\\
 (F_3)_{ijk,i'j'k'}
 &=\bigl\langle(a_i^\dagger a_j^\dagger a_k)^\dagger
 a_{i'}^\dagger a_{j'}^\dagger a_{k'}\bigr\rangle_\lambda\nonumber\\
 &=\delta^{i'j'}_{ij}\delta_{kk'}\langle n_k(1-n_i)(1-n_j)\rangle_\lambda .
 \label{eq:pseudo-blocks3}
\end{align}
If an index of an $E_3$ or $F_3$ row operator is repeated, the
anticommutation relations reduce it to $\pm n_pa_s$ or $\pm n_pa_s^\dagger$
with $p\ne s$.  These operators have no matrix elements with the
distinct-index ones, and among themselves
\begin{equation}
 \begin{aligned}
 \langle(n_pa_s)^\dagger n_{p'}a_{s'}\rangle_\lambda
 &=\delta_{ss'}\langle n_sn_pn_{p'}\rangle_\lambda,\\
 \langle(n_pa_s^\dagger)^\dagger n_{p'}a_{s'}^\dagger\rangle_\lambda
 &=\delta_{ss'}\langle(1-n_s)n_pn_{p'}\rangle_\lambda,
 \end{aligned}
 \label{eq:pseudo-blocks3-repeated}
\end{equation}
with $n_pn_p=n_p$ when $p=p'$.  For fixed $s$ these are full matrices in
$p,p'$.  In this sense the pseudo-state is
diagonal: all of its data are the occupation probabilities
\cref{eq:pseudo-values}.  The only off-diagonal entries are those of the
number-operator part of $G_2$, the first term in \cref{eq:pseudo-blocks}, and
the repeated-index parts \cref{eq:pseudo-blocks3-repeated} of $E_3$ and
$F_3$; these are where genuine matrix positivity must be checked.

\Cref{tab:block-construction} collects the row operators and positivity
ranges verified in Sec.~\ref{sec:q3-check}.

\begin{table}
\caption{What must be checked for the pseudo-state at $q=3$.  The entries
are given in \cref{eq:pseudo-blocks,eq:pseudo-blocks3,eq:pseudo-blocks3-repeated}.
``Weight'' means that the entries are values of \cref{eq:pseudo-values}
and must be nonnegative; ``matrix'' means that a matrix of such values
must be positive semidefinite.  The last column gives the noninteger $\lambda$ for
which the condition holds (Sec.~\ref{sec:q3-check}).}
\label{tab:block-construction}
\begin{ruledtabular}
\begin{tabular}{llll}
Block & Row operators $C$ & Check & Holds for \\
\hline
$D_2$ & $a_ia_j$ & weight & $1\le\lambda\le m-1$ \\
$Q_2$ & $a_i^\dagger a_j^\dagger$ & weight & $1\le\lambda\le m-1$ \\
$G_2$, $i\ne j$ & $a_i^\dagger a_j$ & weight & $0<\lambda<m$ \\
$G_2$, $i=j$ & $1$, $n_i=a_i^\dagger a_i$ & matrix & $0<\lambda<m$ \\
$D_3$, $Q_3$ & $a_ia_ja_k$, $a_i^\dagger a_j^\dagger a_k^\dagger$ & weight & $2\le\lambda\le m-2$ \\
$E_3$, $F_3$, distinct & $a_k^\dagger a_ja_i$, $a_i^\dagger a_j^\dagger a_k$ & weight & $1\le\lambda\le m-1$ \\
$E_3$, $F_3$, repeated & $n_pa_s$, $n_pa_s^\dagger$ & matrix & $1\le\lambda\le m-1$ \\
\end{tabular}
\end{ruledtabular}
\end{table}

\begin{table}
\caption{Entries of the pseudo-state for $m=5$, $\lambda=5/2$
($r=10$, $N=5$, $q=3$).  Both shells carry the pseudo particle number $5/2$, so the
values do not depend on the shell.  The last row is a $q=4$ entry.}
\label{tab:pseudo-numbers}
\begin{ruledtabular}
\begin{tabular}{lc}
Entry & Value \\
\hline
$\langle n_i\rangle$ & $1/2$ \\
$\langle n_in_j\rangle$, same shell & $3/16$ \\
$\langle n_in_j\rangle$, $i\in A$, $j\in B$ & $1/4$ \\
$\langle n_i(1-n_j)\rangle$, same shell & $5/16$ \\
$\langle(1-n_i)(1-n_j)\rangle$, same shell & $3/16$ \\
$\langle n_in_jn_k\rangle$, same shell & $1/32$ \\
$\langle n_in_j(1-n_k)\rangle$, same shell & $5/32$ \\
$\langle n_i(1-n_j)(1-n_k)\rangle$, same shell & $5/32$ \\
$\langle(1-n_i)(1-n_j)(1-n_k)\rangle$, same shell & $1/32$ \\
$\langle n_in_jn_k\rangle$, two in $A$, one in $B$ & $3/32$ \\
\hline
$\langle n_in_jn_kn_l\rangle$, same shell ($q=4$) & $-1/128$ \\
\end{tabular}
\end{ruledtabular}
\end{table}

\Cref{tab:pseudo-numbers} gives the minimal example $m=5$, $\lambda=5/2$.
The resulting 2-RDM has no coherences in the occupation basis: its only
data are the occupations $\langle n_i\rangle=1/2$ and the pair occupations
$\langle n_in_j\rangle=3/16$ for orbitals in the same shell and $1/4$ for
orbitals in different shells, a classical pair distribution that no
probability distribution over determinants can produce.  Nevertheless
$\langle K\rangle=5\cdot\frac12=\frac52$ and
$\langle K^2\rangle=\frac52+20\cdot\frac3{16}=\frac{25}4=\lambda^2$, so the
dispersion vanishes.  Section~\ref{sec:q3-check} shows that this
2-RDM, together with the three-body entries of \cref{tab:pseudo-numbers},
passes every condition of level $q=3$, including $D$, $Q$, $G$, $T_1$,
$T_2$, and all other $(2,3)$ conditions.  At $q=4$ the construction needs
the four-electron probability in the last row, which is negative; $D_4$ then
fails, and the false zero disappears.

\subsection{Positivity at $q=2$ and $q=3$}
\label{sec:q3-positivity}

This subsection has two parts.  First we determine the zero sets at
$q=2$ and $q=3$ by adding the positivity blocks one at a time.  Then we
verify that the pseudo-RDMs pass every block of level three on the
accepted interval.

\subsubsection{Zero sets: adding blocks one at a time}
\label{sec:zero-ladders}

As a baseline, consider $q=2$, that is, the $D_2$, $Q_2$, and $G_2$
(PQG) conditions.  For the counting model the zero set is
\begin{equation}
 Z_2(K)=\{0\}\cup[1,m-1]\cup\{m\}\qquad(m\ge2).
 \label{eq:Z2}
\end{equation}
In words: besides the physical endpoints $0$ and $m$, every value from $1$
to $m-1$, integer or not, is accepted.  For $m=5$ the exact eigenvalues
are $0,1,\ldots,5$, whereas PQG accepts $\{0\}\cup[1,4]\cup\{5\}$; thus
$\lambda=3/2$ and $5/2$ pass, but $\lambda=1/2$ and $9/2$ do not.  We derive
\cref{eq:Z2} by imposing the three blocks one at a time: first $G_2$
alone, then $D_2$, then $Q_2$.  Throughout, normalization, the CAR
relations, and the fixed-$N$ identities are kept; only the positive
semidefinite conditions are added step by step.

\emph{Step 1: $G_2$ alone.}  With only the $G_2$ condition, the zero set is
\begin{equation}
 \boxed{Z_G(K)=[0,m]} .
 \label{eq:ZG}
\end{equation}
Every value between the two physical endpoints is accepted, integer or
not.  Two facts give this.

First, $G_2$ excludes nothing inside $[0,m]$.  For $0\le\lambda\le m$ the
pseudo-RDMs of \cref{eq:pseudo-values} satisfy $G_2\succeq0$: its
single-excitation entries $\langle n_j(1-n_i)\rangle_\lambda$ are products of
$\lambda/m$, $(m-\lambda)/m$, and $\lambda(m-\lambda)/[m(m-1)]$, all
nonnegative, and its number-operator part is positive semidefinite by
\cref{eq:G2-zero-form} below.  By \cref{eq:dispersion-trace} the
dispersion is zero.  These pseudo-RDMs are not physical for
$0<\lambda<1$: the pair occupation
$\langle n_in_j\rangle_\lambda=\lambda(\lambda-1)/[m(m-1)]$ is negative.
$G_2$ cannot see this, because it contains the pair occupations only
inside the positive semidefinite number-operator block.

Second, $G_2$ excludes everything outside $[0,m]$.  The rows $1$ and $n_i$
of $G_2$ give the $2\times2$ principal minor
$\langle n_i\rangle-\langle n_i\rangle^2\ge0$, so
$0\le\langle n_i\rangle\le1$.  Moreover, $K-\mu$ is a combination of the
rows $1,n_1,\ldots,n_r$ for every real $\mu$, so
$\langle(K-\mu)^2\rangle\ge0$.  Expanding,
\[
 \langle(K-\mu)^2\rangle
 =\langle(K-\lambda)^2\rangle+2(\lambda-\mu)\langle K-\lambda\rangle
 +(\lambda-\mu)^2 .
\]
The first term is the dispersion, which is zero.  If
$\langle K-\lambda\rangle\ne0$, a $\mu$ close to $\lambda$ on the appropriate
side would make the right-hand side negative.  Hence
\begin{equation}
 \langle K\rangle=\lambda,\qquad \langle K^2\rangle=\lambda^2 ,
 \label{eq:Z2-moments}
\end{equation}
and $\lambda=\sum_{i\in A}\langle n_i\rangle$ lies in $[0,m]$.

\emph{Step 2: adding $D_2$.}  The diagonal elements of $D_2$ are pair
occupations $\langle n_in_j\rangle$, which must now be nonnegative.  In
shell $A$, $n_i^2=n_i$ gives
\[
 K(K-1)=\sum_{\substack{i,j\in A\\ i\ne j}}n_in_j ,
\]
so \cref{eq:Z2-moments} yields
\begin{equation}
 \lambda(\lambda-1)\ge0 ,
 \label{eq:Z2-D}
\end{equation}
which removes the gap $(0,1)$.  At $\lambda=1/2$ the pair occupations of
$A$ would have to sum to $-1/4$.  Shell $B$ contains $\widehat N-K=m-K$
electrons on the fixed-$N$ space, so its pair occupations sum to
$(m-K)(m-K-1)$ and give
\begin{equation}
 (m-\lambda)(m-\lambda-1)\ge0 ,
 \label{eq:Z2-Q}
\end{equation}
which removes the gap $(m-1,m)$.  At $m=5$, $\lambda=9/2$ the pairs of $B$
would have to sum to $-1/4$.  Adding $D_2$ to $G_2$ therefore already
gives \cref{eq:Z2}.

\emph{Step 3: adding $Q_2$.}  The diagonal elements of $Q_2$ are hole-pair
occupations $\langle(1-n_i)(1-n_j)\rangle$.  In shell $A$ there are $m-K$
holes and in shell $B$ there are $K$, so the hole pairs sum to
$(m-K)(m-K-1)$ and $K(K-1)$.  These are the same two inequalities
\cref{eq:Z2-D,eq:Z2-Q}, and $Q_2$ removes nothing new.  This redundancy is
special to the counting model: at half filling with two equal shells,
the particle--hole exchange maps electron pairs of one shell to hole
pairs of the other.  In general, $Q_2$ is independent of $D_2$ and $G_2$.

The three steps are summarized as follows:
\begin{center}
\begin{tabular}{cccc}
 $\lambda$ & $G_2$ & $G_2+D_2$ & $G_2+D_2+Q_2$\\
 \hline
 $0$ & accepted & accepted & accepted\\
 $(0,1)$ & accepted & rejected & rejected\\
 $[1,m-1]$ & accepted & accepted & accepted\\
 $(m-1,m)$ & accepted & rejected & rejected\\
 $m$ & accepted & accepted & accepted
\end{tabular}
\end{center}
The endpoints $\lambda=0$ and $m$ survive every step because there the
products in \cref{eq:Z2-D,eq:Z2-Q} are exactly zero; they are the
physical states with all electrons in $B$ or all in $A$.  On $[1,m-1]$ the
pseudo-RDMs also pass $D_2$ and $Q_2$: the pair and hole-pair occupations
$\fall{\lambda}{2}/\fall{m}{2}$ and $\fall{m-\lambda}{2}/\fall{m}{2}$ are
nonnegative because $1\le\lambda\le m-1$.  Together with Step~1 this gives
the equality \cref{eq:Z2}.

Now go one level up, to full three-positivity (3POS), which adds the
blocks $D_3$, $E_3$, $F_3$, and $Q_3$.  The zero set becomes
\begin{equation}
 Z_3(K)=\{0,1\}\cup[2,m-2]\cup\{m-1,m\}\qquad(m\ge4).
 \label{eq:Z3}
\end{equation}
In words: each end loses one more unit of the false interval.  The
integers are still all accepted, and so are the nonintegers between $2$
and $m-2$, but now also the gaps $(1,2)$ and $(m-2,m-1)$ are rejected.
For $m=5$, 3POS rejects $\lambda=1.5$ and $3.5$, which PQG accepted, but
still accepts $\lambda=2.5$.  As for PQG, we add the new blocks one at a
time on top of the PQG conditions: first $E_3$, then $D_3$, then $F_3$ and
$Q_3$.

\emph{Step 1: adding $E_3$.}  The rows of $E_3$ remove one electron; they
include $a_i(K-\mu)$ for every orbital $i$ and real $\mu$, since on the
fixed-$N$ space $a_i$ itself is a combination of the rows $a_in_j$.
Hence
\[
 \langle n_i(K-\lambda)^2\rangle
 =\bigl\langle\bigl(a_i(K-\lambda)\bigr)^\dagger a_i(K-\lambda)\bigr\rangle
 \ge0
\]
for every orbital $i$.
Summing over $i\in A$ gives $\langle K(K-\lambda)^2\rangle\ge0$; summing over
$i\in B$, where $\sum_{i\in B}n_i=m-K$ on the fixed-$N$ space, gives
$\langle(m-K)(K-\lambda)^2\rangle\ge0$.  The two sums add up to
$m\langle(K-\lambda)^2\rangle=0$, so both vanish.  With
\cref{eq:Z2-moments}, $\langle K(K-\lambda)^2\rangle=0$ is
\begin{equation}
 \langle K^3\rangle=\lambda^3 .
 \label{eq:Z3-moment}
\end{equation}
$E_3$ thus fixes the third moment, but it rejects no new value: on
$[1,m-1]$ the pseudo-RDMs pass $E_3$ (its distinct-index entries are
nonnegative and its repeated-index parts are positive semidefinite there,
as checked below).  After this step the zero set is still $Z_2(K)$.

\emph{Step 2: adding $D_3$.}  The diagonal elements of $D_3$ are triple
occupations $\langle n_in_jn_k\rangle$, which must now be nonnegative.  In
shell $A$,
\[
 K(K-1)(K-2)=\sum_{\substack{i,j,k\in A\\ \text{distinct}}}n_in_jn_k ,
\]
so \cref{eq:Z2-moments,eq:Z3-moment} give
\begin{equation}
 \lambda(\lambda-1)(\lambda-2)\ge0 ,
 \label{eq:Z3-D}
\end{equation}
which removes the gap $(1,2)$.  At $m=5$, $\lambda=3/2$ the triples of $A$
would have to sum to $-3/8$: each of the $60$ ordered triples carries
$-1/160$.  (On $(0,1)$ the product is positive; that gap was already
removed by $D_2$.)  The $m-K$ electrons of shell $B$ give in the same way
\begin{equation}
 (m-\lambda)(m-\lambda-1)(m-\lambda-2)\ge0 ,
 \label{eq:Z3-Q}
\end{equation}
which removes the gap $(m-2,m-1)$; at $m=5$, $\lambda=7/2$ it again equals
$-3/8$.  Adding $D_3$ therefore already gives \cref{eq:Z3}.

\emph{Step 3: adding $F_3$ and $Q_3$.}  These are the hole counterparts of
$E_3$ and $D_3$.  The rows $a_i^\dagger(K-\mu)$ of $F_3$ give
$\langle(1-n_i)(K-\lambda)^2\rangle=0$, again equivalent to
\cref{eq:Z3-moment}.  The hole triples of $Q_3$ sum to
$(m-K)(m-K-1)(m-K-2)$ in shell $A$ and to $K(K-1)(K-2)$ in shell $B$,
reproducing \cref{eq:Z3-D,eq:Z3-Q}.  As for $Q_2$, nothing new is removed,
for the same particle--hole reason.

The three steps are summarized as follows:
\begin{center}
\begin{tabular}{cccc}
 $\lambda$ & PQG$+E_3$ & $+D_3$ & $+F_3+Q_3$\\
 \hline
 $0,\,1$ & accepted & accepted & accepted\\
 $(1,2)$ & accepted & rejected & rejected\\
 $[2,m-2]$ & accepted & accepted & accepted\\
 $(m-2,m-1)$ & accepted & rejected & rejected\\
 $m-1,\,m$ & accepted & accepted & accepted
\end{tabular}
\end{center}
The gaps $(0,1)$ and $(m-1,m)$, removed by PQG, are omitted.  The
integers survive because they are physical; at $\lambda=1$, $2$, $m-2$,
$m-1$ the products in \cref{eq:Z3-D,eq:Z3-Q} are exactly zero or positive.
On $[2,m-2]$ the pseudo-RDMs pass all four new blocks: the triple and
hole-triple occupations $\fall{\lambda}{3}/\fall{m}{3}$ and
$\fall{m-\lambda}{3}/\fall{m}{3}$ are nonnegative because
$2\le\lambda\le m-2$, the mixed triples are nonnegative already for
$1\le\lambda\le m-1$, and the remaining matrices are checked below.  The
new factors $\lambda-2$ and $m-\lambda-2$ are the first step of the
interval contraction described by \cref{thm:main}.

\subsubsection{Verification that the pseudo-RDMs pass level three}
\label{sec:q3-check}

It remains to verify the accepted values: that on $[2,m-2]$ the
pseudo-RDMs pass every block of level three.  Every condition can be
checked by inspecting the blocks
\cref{eq:pseudo-blocks,eq:pseudo-blocks3,eq:pseudo-blocks3-repeated}
directly.  They contain two kinds of pieces.

\paragraph*{Diagonal entries.}
$D_2$, $Q_2$, the single-excitation part of $G_2$, $D_3$, $Q_3$, and the
distinct-index parts of $E_3$ and $F_3$ are diagonal up to antisymmetry, so
they are positive semidefinite exactly when every entry is nonnegative.
Each entry is a value of \cref{eq:pseudo-values}, a product of falling
factorials of $\lambda$ and $m-\lambda$; their signs are those worked out in
the two ladders above.  For noninteger $\lambda$, all diagonal entries are
nonnegative exactly when $2\le\lambda\le m-2$.

\paragraph*{Number-operator part of $G_2$.}
Here the rows are $1,n_1,\ldots,n_r$, so the condition is
$\langle p^\dagger p\rangle_\lambda\ge0$ for $p=c_0+\sum_ic_in_i$.  Write
$\bar c_A$ and $\bar c_B$ for the averages of $c_i$ over each shell.  Using
$\langle n_i\rangle=\theta/M$ and
$\langle n_in_k\rangle=\theta(\theta-1)/[M(M-1)]$ within a shell of $M$
orbitals with pseudo particle number $\theta$, one finds
\begin{equation}
 \langle p^\dagger p\rangle_\lambda
 =\bigl|\langle p\rangle_\lambda\bigr|^2
 +\gamma\sum_{i\in A}|c_i-\bar c_A|^2
 +\gamma\sum_{i\in B}|c_i-\bar c_B|^2,
 \label{eq:G2-zero-form}
\end{equation}
with $\gamma=\theta(M-\theta)/[M(M-1)]=\lambda(m-\lambda)/[m(m-1)]$ for both
shells.  The first term comes from the parts of $p$ that are constant on
each shell, that is, from $1$, $K$, and $\widehat N-K$, which have zero
variance; the other terms come from the fluctuations within each shell.
This block is positive semidefinite for $0\le\lambda\le m$.  For
$0<\lambda<m$ one has $\gamma>0$, and its null vectors are exactly the
combinations of $K-\lambda$ and $\widehat N-N$; at the endpoints
$\lambda=0,m$ the shell fluctuations also vanish and the null space is
larger.  The first
null vector is the zero dispersion.  For example, at $m=5$, $\lambda=5/2$
one has $\gamma=5/16$.  The operator $p=n_1-n_2$ with $1,2\in A$ has
$\langle p\rangle_\lambda=0$ and
$\langle p^\dagger p\rangle_\lambda=\tfrac12+\tfrac12-2\cdot\tfrac3{16}
=\tfrac58=2\gamma$, a positive fluctuation, whereas $p=K-\lambda$ gives
zero.

\paragraph*{Repeated-index parts of $E_3$ and $F_3$.}
For fixed $s$, the rows $a_s$ and $n_pa_s$ give entries
$\langle n_sn_pn_{p'}\rangle_\lambda$.  Because
$\fall{\lambda}{1+t}/\fall{m}{1+t}
=(\lambda/m)\,\fall{\lambda-1}{t}/\fall{m-1}{t}$, requiring an electron in
$s\in A$ simply removes one orbital and one pseudo particle from shell $A$:
the matrix equals $\langle n_s\rangle_\lambda=\lambda/m$ times a
number-operator $G_2$ block of the same form \cref{eq:G2-zero-form}, now with
$(\theta,M)=(\lambda-1,m-1)$ in shell $A$.  It is therefore positive
semidefinite for $1\le\lambda\le m$.  The same holds for $s\in B$, for which
$1\le m-\lambda$, and for $F_3$, where a hole in $s$ removes one orbital but
no particle.  Together these require $1\le\lambda\le m-1$.  For example,
at $m=5$, $\lambda=5/2$, the rows $a_s$ and $n_pa_s$ with $s,p\in A$ give
the $2\times2$ block
\[
 \begin{pmatrix}
 \langle n_s\rangle & \langle n_sn_p\rangle\\
 \langle n_sn_p\rangle & \langle n_sn_p\rangle
 \end{pmatrix}_\lambda
 =
 \begin{pmatrix}
 1/2 & 3/16\\
 3/16 & 3/16
 \end{pmatrix},
\]
with positive determinant $15/256$.

\paragraph*{Result.}
The ranges are collected in \cref{tab:block-construction}.  The
matrix conditions in $G_2$, $E_3$, and $F_3$ are the weaker ones; the
binding conditions are the signs of the three-electron and three-hole
probabilities in $D_3$ and $Q_3$.  For noninteger $\lambda$, every block at
$q=3$ is positive semidefinite exactly when $2\le\lambda\le m-2$, which is the central
interval of \cref{thm:main} for $q=3$.  On this interval the pseudo
reduced density matrices pass all conditions of the level $q=3$, including
$D$, $Q$, $G$, $T_1$, $T_2$, and all other $(2,3)$ conditions, while
$\Tr[(K-\lambda)^2D_2]=0$.  That these pieces exhaust the conditions at
$q=3$ follows from the sub-block structure in Appendix~\ref{app:general}.

\section{Exact false-zero set}

\begin{theorem}[False excitation continuum]
\label{thm:main}
Let $2\le q\le m$.  For the counting operator \cref{eq:counting}, the zero
set of the full order-$q$ positivity relaxation is
\begin{equation}
 \boxed{
 \begin{aligned}
 Z_q(K)={}&\{0,1,\ldots,q-2\}\\
 &{}\cup [q-1,m-q+1]\\
 &{}\cup \{m-q+2,\ldots,m\}.
 \end{aligned}}
 \label{eq:zero-set}
\end{equation}
An interval whose lower endpoint exceeds its upper endpoint is empty.
By \cref{prop:projection}, the same zero set is obtained from full
$(2,q)$-positivity.
\end{theorem}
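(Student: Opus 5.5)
The proof splits into two inclusions, following the pattern of the $q=2$ and $q=3$ ladders in Sec.~\ref{sec:zero-ladders}.

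For $\supseteq$: the integers $0,\ldots,m$ are physical eigenvalues, so they lie in $Z_q(K)$ by the elementary bound of Sec.~\ref{sec:counting}. For noninteger $\lambda\in[q-1,m-q+1]$ I would use the pseudo-state $\widehat{\mathcal L}_\lambda$ of \cref{eq:fixedN-interpolation}, phase-averaged as in Appendix~\ref{app:general}.
\begin{itemize}
\item It is normalized, obeys every fixed-$N$ identity exactly, and has zero dispersion by \cref{eq:dispersion-trace}, since $\langle K\rangle_\lambda=\lambda$ and $\langle K(K-1)\rangle_\lambda=\fall{\lambda}{2}$.
\item For positivity I would invoke the sub-block decomposition of Sec.~\ref{sec:arbitrary-order}: after the phase average every block through level $q$ splits into $(U,V)$ sub-blocks. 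Each sub-block is either a single occupation-pattern probability on at most $q$ orbitals, or the pseudo-expectation of the square of an occupation polynomial of degree at most $q-1$.
\item The probabilities \cref{eq:pseudo-values} are products $\fall{\lambda}{p}\fall{m-\lambda}{h}/\fall{m}{p+h}$ with $p+h\le q$. They are nonnegative because every factor $\lambda-j$ and $m-\lambda-j$ with $j\le q-1$ is nonnegative on this interval.
\item The degree-$(q-1)$ moment matrices are positive semidefinite there by \cref{lem:slice-positive}.
\item The product formula is valid because $2q-2\le m$ whenever the interval is nonempty.
\end{itemize}

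For $\subseteq$: take any $\cL\in\cF_q^{\mathrm{full},[N]}$ with $\langle(K-\lambda)^2\rangle=0$. I would first show by induction on $s\le q$ that $\langle K^s\rangle=\lambda^s$, generalizing the $E_3$ step. The $G_2$ rows $1,n_i$ give $\langle K\rangle=\lambda$ and $\langle K^2\rangle=\lambda^2$ via the $\mu$-variation argument. At level $q$, for every occupation monomial $w=n_{P}$ with $2|P|+1\le q$, the rows $a_iw(K-\mu)$ and $a_i^\dagger w(K-\mu)$ are admissible modulo fixed-$N$ identities, since $K-\mu$ is a combination of number operators. Summing $\langle n_i w^\dagger w(K-\lambda)^2\rangle$ and $\langle(1-n_i)w^\dagger w(K-\lambda)^2\rangle$ over all $i$, each term is $\ge0$ and the total equals $m\langle w^\dagger w(K-\lambda)^2\rangle$. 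Using symmetric products such as $w^\dagger w$ replaced by $K(K-1)\cdots$, I would try to show inductively that
\[
\bigl\langle \fall{K}{t}(K-\lambda)^2\bigr\rangle=0
\]
for all $t\le q-2$, with a Cauchy--Schwarz argument on the PSD block supplying $\langle \fall{K}{t}(K-\lambda)\rangle=0$. Together these determine $\langle\fall{K}{s}\rangle=\fall{\lambda}{s}$ for $s\le q$.

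Once these moments are fixed, the $D_q$ diagonal (triple, quadruple, \ldots{} occupations in shell $A$) gives $\fall{\lambda}{q}=\langle\fall{K}{q}\rangle\ge0$, and the same diagonal in shell $B$ gives $\fall{m-\lambda}{q}\ge0$; $Q_q$ yields the same two inequalities. Combined with the lower levels, which are implied by \eqref{eq:fixed-N-lifting}, we have $\fall{\lambda}{s}\ge0$ and $\fall{m-\lambda}{s}\ge0$ for all $s\le q$. For noninteger $\lambda$ the first family forces $\lambda>q-1$ or $\lambda$ to lie in an already excluded gap. Inducting on $s$ removes each gap $(s-2,s-1)$ in turn, exactly as $D_2$ and $D_3$ did, giving $\lambda\ge q-1$; symmetrically, $\lambda\le m-q+1$. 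The statement for full $(2,q)$-positivity then follows from \cref{eq:equal-counting-energies}.

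I expect the main obstacle to be the moment-fixing induction: showing that zero dispersion propagates to $\langle \fall{K}{s}\rangle=\fall{\lambda}{s}$ for all $s\le q$ using only rows of length at most $q$. The degree bookkeeping is tight: $\fall{K}{s}$ itself has length $2s$, so one must use rows of the form $a_U n_P(K-\mu)$ and exploit the shell symmetry. Averaging $\cL$ over permutations within $A$ and within $B$ is legitimate since $H_\lambda$ and the feasible set are invariant, and it reduces everything to symmetric moments. I would first try this symmetrization followed by the $E_3$-style trick of summing complementary particle and hole rows to zero.
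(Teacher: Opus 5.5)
Your sufficiency half is the paper's argument. So is your endgame for necessity: turning $\langle\fall{K}{s}\rangle\ge0$ from the $D_s$ and $Q_s$ diagonals into $\fall{\lambda}{s}\ge0$ and $\fall{m-\lambda}{s}\ge0$, then removing the gaps $(s-2,s-1)$ one at a time.

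\textbf{The gap.} It is the step you flag as the main obstacle, fixing $\langle K^s\rangle=\lambda^s$ for all $s\le q$. With the row family you commit to, it does not close.
\begin{itemize}
\item A row $a_i\,n_P(K-\mu)$ has length $2|P|+3$, not $2|P|+1$, because each occupation factor costs two operators.
\item Your induction therefore localizes the zero dispersion only on sets of at most $(q-1)/2$ orbitals. It fixes $\langle K^s\rangle$ only for $s\le\lfloor(q-1)/2\rfloor+2$.
\item This suffices at $q=3$, where it is the $E_3$ step. At $q=4$ it already fails: $P=\varnothing$ is forced, so you get $\langle K^3\rangle=\lambda^3$ but not $\langle K^4\rangle$.
\item Then $D_4\succeq0$ cannot be converted into $\fall{\lambda}{4}\ge0$. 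For $m\ge5$ the false gap $(2,3)$ is never excluded, since all $\fall{\lambda}{s}$ and $\fall{m-\lambda}{s}$ with $s\le3$ are positive there.
\end{itemize}
Symmetrizing over shell permutations or adding a Cauchy--Schwarz step does not change this length count.

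\textbf{The missing idea.} An occupation pattern is the square of a word of half its operator length: $n_S=a_S^\dagger a_S$ and $1-n_j=a_ja_j^\dagger$.
\begin{itemize}
\item For any set $T$ of at most $q-2$ orbitals and any pattern $\chi_\varepsilon$ on $T$, there is a word $V_\varepsilon$ of length $|T|$ with $V_\varepsilon^\dagger V_\varepsilon=\chi_\varepsilon$.
\item The row $V_\varepsilon(K-\lambda)$ then has length $|T|+2\le q$, so each $\langle\chi_\varepsilon(K-\lambda)^2\rangle$ is nonnegative.
\item The $2^{|T|}$ patterns sum to the identity and $\langle(K-\lambda)^2\rangle=0$, so all of these terms vanish.
\end{itemize}
Your particle/hole splitting $n_i+(1-n_i)=1$ also works as an induction, provided $w$ runs over words such as $a_U$ rather than $n_P$. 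Your closing remark about rows $a_Un_P(K-\mu)$ points this way, and $P=\varnothing$ is all that is needed.

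\textbf{Finishing the moments.} The above gives $\langle n_S(K-\lambda)^2\rangle=0$ for $S\subseteq A$ with $|S|\le q-2$. Hence $\langle K^t(K-\lambda)^2\rangle=0$ for $0\le t\le q-2$. The recursion $M_{t+2}-2\lambda M_{t+1}+\lambda^2M_t=0$ with $M_0=1$, $M_1=\lambda$ then yields $M_s=\lambda^s$ for all $s\le q$, with no symmetrization or Cauchy--Schwarz needed.
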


\subsection{Sufficiency}

Every integer $k\in\{0,\ldots,m\}$ belongs to $Z_q(K)$: take any physical
state supported on determinants with exactly $k$ occupied orbitals in $A$.

It remains to prove that every
\begin{equation}
 q-1\le\lambda\le m-q+1
 \label{eq:central-lambda}
\end{equation}
is a relaxed zero.  For $q=3$ this was shown in
Sec.~\ref{sec:q3-check}.  For general $q$,
\cref{lem:general-positivity} in Appendix~\ref{app:general} shows that
$\langle\cdot\rangle_\lambda\in\cF_q^{\mathrm{full},[N]}$ on the interval
\cref{eq:central-lambda}.

Finally,
\[
 \langle K\rangle_\lambda=\lambda,\qquad
 \langle \fall{K}{2}\rangle_\lambda
 =\fall{\lambda}{2},
\]
so
\[
 \langle K^2\rangle_\lambda=\lambda^2,\qquad
 \langle (K-\lambda)^2\rangle_\lambda=0.
\]
This proves the inclusion ``$\supseteq$'' in \cref{eq:zero-set}.

\subsection{Necessity}
\label{sec:necessity}

Suppose $\langle\cdot\rangle\in\cF_q^{\mathrm{full},[N]}$ and
$\langle(K-\lambda)^2\rangle=0$.  Put $\mu=\langle K\rangle$.  Since positivity through order
two implies $\langle (K-t)^2\rangle\ge0$ for every real $t$,
\[
 0=\langle (K-\lambda)^2\rangle
  =\langle (K-\mu)^2\rangle+(\mu-\lambda)^2.
\]
Hence
\begin{equation}
 \langle K\rangle=\lambda,\qquad \langle K^2\rangle=\lambda^2.
 \label{eq:first-two}
\end{equation}

Let $T$ be any set of at most $q-2$ orbitals and let
$\chi_\varepsilon$ run over the $2^{|T|}$ occupation patterns on $T$, the
projectors that prescribe each orbital of $T$ to be occupied or empty.
For each pattern there is a product $V_\varepsilon$ of creation and
annihilation operators of length
$|T|$ such that
$V_\varepsilon^\dagger V_\varepsilon=\chi_\varepsilon$.  Because
$|T|+2\le q$,
\[
 \left\langle \chi_\varepsilon(K-\lambda)^2\right\rangle
 =
 \left\langle (V_\varepsilon(K-\lambda))^\dagger
 V_\varepsilon(K-\lambda)\right\rangle\ge0.
\]
The patterns sum to the identity, whereas their expectations sum to
$\langle (K-\lambda)^2\rangle=0$.  Therefore every term vanishes.  In particular,
\begin{equation}
 \langle n_S(K-\lambda)^2\rangle=0
 \qquad(S\subseteq A,\ |S|\le q-2).
 \label{eq:localized-zero}
\end{equation}
Since $K^t$ is a linear combination of the square-free monomials $n_S$ with
$|S|\le t$, \cref{eq:localized-zero} implies
\begin{equation}
 \langle K^t(K-\lambda)^2\rangle=0,
 \qquad 0\le t\le q-2.
 \label{eq:moment-rec}
\end{equation}
Writing $M_t=\langle K^t\rangle$ gives
\[
 M_{t+2}-2\lambda M_{t+1}+\lambda^2M_t=0.
\]
Together with \cref{eq:first-two}, induction yields
\begin{equation}
 \langle K^s\rangle=\lambda^s,\qquad 0\le s\le q.
 \label{eq:powers}
\end{equation}
Consequently,
\begin{equation}
 \begin{aligned}
 \langle \fall{K}{s}\rangle&=\fall{\lambda}{s},\\
 \langle \fall{m-K}{s}\rangle&=\fall{m-\lambda}{s},
 \qquad 1\le s\le q.
 \end{aligned}
 \label{eq:factorial}
\end{equation}

Now
\[
 \fall{K}{s}
 =
 s!\sum_{\substack{S\subseteq A\\|S|=s}}n_S,
 \qquad
 \fall{m-K}{s}
 =
 s!\sum_{\substack{S\subseteq A\\|S|=s}}
 \prod_{i\in S}(1-n_i).
\]
Each summand is the square of a product of $s$ annihilation (or creation)
operators, so full positivity gives
\begin{equation}
 \fall{\lambda}{s}\ge0,\qquad
 \fall{m-\lambda}{s}\ge0,
 \qquad 1\le s\le q.
 \label{eq:falling-positive}
\end{equation}

If $0\le x<q-1$ is not an integer, put $j=\lfloor x\rfloor$.
Then $j\le q-2$ and
$\fall{x}{j+2}<0$, because precisely its last factor is negative.
Therefore
\begin{equation}
 \fall{x}{s}\ge0\ (1\le s\le q)
 \Longleftrightarrow
 x\in\{0,\ldots,q-2\}\cup[q-1,\infty).
 \label{eq:falling-character}
\end{equation}
Applying \cref{eq:falling-character} to both $\lambda$ and $m-\lambda$
gives exactly the right-hand side of \cref{eq:zero-set}.  This proves the
opposite inclusion and completes the proof of \cref{thm:main}.

\subsection{Consequences}

The general zero-set formula immediately gives a counterexample at every
fixed level of the hierarchy.

\begin{corollary}[An explicit counterexample at every level]
\label{cor:every-q}
For every integer $q\ge2$, set
\begin{equation}
 N=m=2q-1,\qquad r=2m=4q-2,\qquad
 \lambda_*=q-\frac12.
 \label{eq:every-q-instance}
\end{equation}
Then
\begin{equation}
 E_{\mathrm{exact}}(\lambda_*)=\frac14,
 \qquad
 E_q(\lambda_*)=E_{(2,q)}(\lambda_*)=0.
 \label{eq:every-q-gap}
\end{equation}
Moreover, the entire interval $[q-1,q]$ is a relaxed zero interval, whereas
only its two endpoints are physical eigenvalues.
\end{corollary}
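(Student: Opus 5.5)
The corollary is a direct specialization of \cref{thm:main}, and the plan is to read off each claim from the zero-set formula \cref{eq:zero-set} with $m=2q-1$. With this choice the central interval $[q-1,m-q+1]$ becomes $[q-1,(2q-1)-q+1]=[q-1,q]$. It is nonempty and has length exactly one, so it is the minimal balanced instance in which the interval contains nonintegers. The hypothesis $2\le q\le m$ of the theorem holds because $m=2q-1\ge q$ for $q\ge1$. Hence $Z_q(K)\supseteq[q-1,q]$. Since $\lambda_*=q-\tfrac12$ lies in this interval, $E_q(\lambda_*)=0$. The statement that the entire interval $[q-1,q]$ is a relaxed zero interval follows immediately. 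Only its endpoints $q-1$ and $q$ are integers, and by \cref{eq:exact-spectrum} these are exactly its physical eigenvalues.

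For the polar value I would not need the full strength of \cref{prop:projection}. The elementary chain $0\le E_{(2,q)}(\lambda)\le E_q(\lambda)$ established at the end of Sec.~\ref{sec:counting} suffices: the lower bound holds because $H_\lambda=C^\dagger C$ is itself a $(2,q)$ condition, and the upper bound comes from the trivial inclusion of the contracted full set into the polar set. Together with $E_q(\lambda_*)=0$ this forces $E_{(2,q)}(\lambda_*)=0$. Alternatively one cites \cref{eq:equal-counting-energies} directly.

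For the exact value I use \cref{eq:exact-energy}: $E_{\mathrm{exact}}(\lambda_*)=\min_k(k-q+\tfrac12)^2$. The nearest integers $q-1$ and $q$ both lie in $\{0,\dots,m\}$ and sit at distance $\tfrac12$, so the minimum is $\tfrac14$.

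No step is a real obstacle. The only point needing care is checking that the parameters satisfy the hypotheses of \cref{thm:main}, namely $q\le m$ and $2q-2\le m$. The latter is what makes the pseudo-moment product formula valid in the sufficiency part, and here it holds with equality.
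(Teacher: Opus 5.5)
Your proposal is correct and follows the paper's proof: specialize \cref{thm:main} to $m=2q-1$ to obtain the central interval $[q-1,q]$, get $E_{\mathrm{exact}}(\lambda_*)=1/4$ from the distance to the nearest integer, and get $E_{(2,q)}(\lambda_*)=0$ from the elementary sandwich $0\le E_{(2,q)}\le E_q$ rather than from the full projection equivalence. One small slip in your closing remark: $2q-2\le m=2q-1$ holds strictly, not with equality, since equality would mean $m=2q-2$, where the interval collapses to the single integer $q-1$.
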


\begin{proof}
For $m=2q-1$, the central interval in \cref{eq:zero-set} is $[q-1,q]$.
Its midpoint is the half-integer $q-\tfrac12$, whose squared distance from the
integer spectrum is $1/4$.  The vanishing of $E_{(2,q)}(\lambda_*)$
follows from $0\le E_{(2,q)}\le E_q$ (Sec.~\ref{sec:counting}).
\end{proof}

The first three instances beyond two-positivity, at orders $q=3,4,5$, are
listed in \cref{tab:explicit-examples}.  The notation is worth emphasizing:
$q$ is the positivity order, while $r$ is the number of spin orbitals.

\begin{table*}[t]
\caption{Smallest balanced members of the family with a false continuum for
$q=3,4,5$.  The interior of the displayed interval consists entirely of
nonphysical relaxed zeros.}
\label{tab:explicit-examples}
\begin{ruledtabular}
\begin{tabular}{rrrrrrr}
$q$ & $N$ & $r$ & $\lambda_*$ & relaxed interval & exact energy & relaxed energy \\
3 & 5 & 10 & $5/2$ & $[2,3]$ & $1/4$ & $0$ \\
4 & 7 & 14 & $7/2$ & $[3,4]$ & $1/4$ & $0$ \\
5 & 9 & 18 & $9/2$ & $[4,5]$ & $1/4$ & $0$ \\
\end{tabular}
\end{ruledtabular}
\end{table*}

\begin{figure}[t]
 \centering
 \includegraphics[width=.96\linewidth]{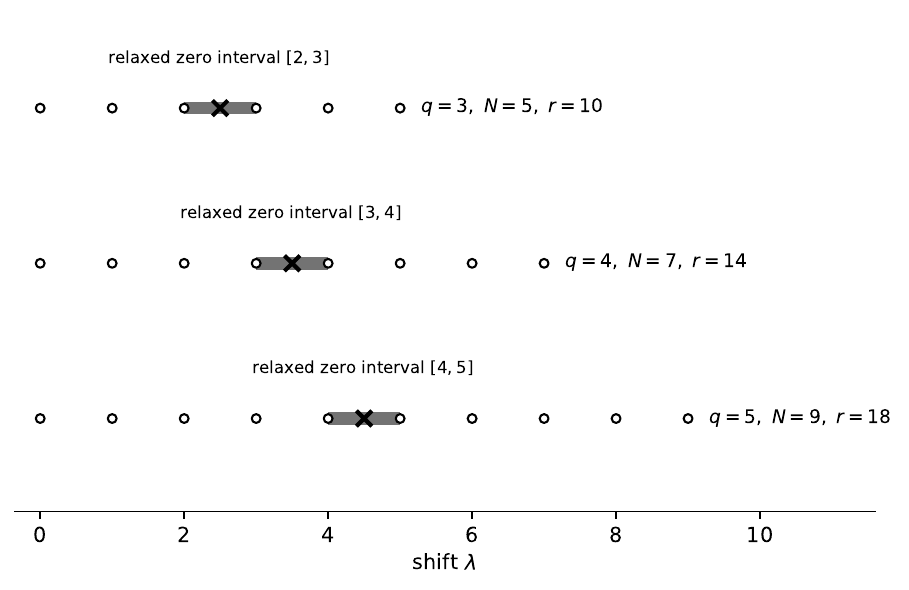}
 \caption{Explicit false continua for full $(2,3)$-, $(2,4)$-, and
 $(2,5)$-positivity.  Open circles mark the exact integer spectrum; the thick
 segment is the relaxed zero interval; and the cross marks the half-integer
 midpoint, where the exact dispersion is $1/4$ and the relaxed value is zero.}
 \label{fig:q345}
\end{figure}

\begin{corollary}[Minimal full-three-positivity counterexample]
\label{cor:q3}
For $q=3$, the smallest balanced counting model with a nontrivial false
continuum is $N=m=5$, that is, ten spin orbitals and five fermions.  Its zero
set is
\[
 Z_3(K)=\{0,1\}\cup[2,3]\cup\{4,5\}.
\]
At $\lambda=5/2$,
\[
 E_{\mathrm{exact}}(5/2)=1/4,
 \qquad
 E_3(5/2)=E_{(2,3)}(5/2)=0.
\]
The projected two-body moments therefore satisfy full $(2,3)$-positivity
while assigning zero dispersion to a value that is not an eigenvalue.
\end{corollary}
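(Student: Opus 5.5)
This corollary is the specialization $q=3$ of \cref{thm:main} combined with \cref{cor:every-q}. The plan is to check the three parts in turn: the zero set, the energies at $\lambda=5/2$, and minimality among balanced models.

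\emph{Zero set.} Setting $q=3$ and $m=5$ in \cref{eq:zero-set} gives
\[
\{0,1\}\cup[2,3]\cup\{4,5\}.
\]
The hypothesis $2\le q\le m$ of \cref{thm:main} holds.

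\emph{Energies at $\lambda=5/2$.} The value $5/2$ lies in the central interval $[2,3]$, so $E_3(5/2)=0$. The chain $0\le E_{(2,3)}\le E_3$ from Sec.~\ref{sec:counting} then gives $E_{(2,3)}(5/2)=0$. \Cref{prop:projection} gives the same conclusion directly. The exact value follows from \cref{eq:exact-energy}: the nearest integers to $5/2$ are $2$ and $3$, so $E_{\mathrm{exact}}(5/2)=(1/2)^2=1/4$. As an independent concrete witness I would cite the explicit pseudo-state of \cref{tab:pseudo-numbers}. Sec.~\ref{sec:q3-check} shows that it passes every level-three block, and its dispersion was computed there to vanish, since $\langle K^2\rangle=25/4=\lambda^2$.

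\emph{Minimality.} A nontrivial false continuum means the central interval $[q-1,m-q+1]$ contains a noninteger. For $q=3$ this requires $2<m-2$, that is, $m\ge5$. When $m=4$ the interval is $\{2\}$, and for $m\le3$ it is empty. In these smaller balanced cases \cref{thm:main} therefore gives $Z_3(K)=\{0,\ldots,m\}$, the exact spectrum. The case $m=3<q$ falls outside the theorem's hypothesis, but the necessity argument of Sec.~\ref{sec:necessity} still applies there (with the moments truncated at order $m$) and yields only integers.

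The last sentence of the corollary restates the vanishing of $E_{(2,3)}$ in terms of the projected moments in $\cF_{2,3}^{\mathrm{polar},[N]}$. The only mildly delicate point is the small-$m$ bookkeeping when $m<q$; the rest is direct substitution.
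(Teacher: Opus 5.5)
Your proposal is correct and matches the paper's route: the corollary is the case $q=3$, $m=2q-1=5$ of \cref{thm:main} and \cref{cor:every-q}, with $E_{(2,3)}(5/2)=0$ following from $0\le E_{(2,3)}\le E_3$, and minimality following from the requirement $m\ge 2q-1$ for the central interval to contain a noninteger. One slip: $m=3$ is not below $q$ and is covered by \cref{thm:main}, since its central interval $[2,1]$ is empty; the balanced models outside the hypothesis are $m\le 2$, and for those $Z_3(K)\subseteq Z_2(K)=\{0,\ldots,m\}$ (by \cref{eq:Z2}, or directly from $K^2=K$ when $m=1$) settles the matter without re-running the necessity argument.
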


\begin{corollary}[Linear hierarchy depth]
\label{cor:linear}
The relaxed zero set equals the exact spectrum $\{0,1,\ldots,N\}$ for the
balanced counting family if and only if
\begin{equation}
 q\ge\left\lceil\frac{N+2}{2}\right\rceil.
 \label{eq:threshold}
\end{equation}
Hence exact recovery of the excitation zero set requires $q=\Omega(N)$.
\end{corollary}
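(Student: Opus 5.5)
The corollary follows from \cref{thm:main} once we find when the central interval $[q-1,m-q+1]$ stops contributing noninteger values. Throughout, $N=m$ and $2\le q\le m$; if one wants to allow $q>m$, full $q$-positivity is at least as strong as full $m$-positivity, and the lifting identity \cref{eq:fixed-N-lifting} then gives the monotonicity $Z_{q'}(K)\subseteq Z_q(K)$ for $q'\ge q$.

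First I would note that the three pieces of \cref{eq:zero-set} are each contained in $\{0,\ldots,m\}\cup[q-1,m-q+1]$. The two outer pieces consist only of integers in $\{0,\ldots,m\}$. Hence $Z_q(K)=\{0,1,\ldots,m\}$ exactly when the interval $[q-1,m-q+1]$ contains no noninteger point. Conversely, every integer is always in $Z_q(K)$, as established in the sufficiency part.

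A closed interval with integer endpoints $a=q-1$ and $b=m-q+1$ contains a noninteger if and only if $a<b$. If $a=b$ it is a single integer, and if $a>b$ it is empty. So recovery holds if and only if $q-1\ge m-q+1$, that is, $2q\ge m+2$. Since $q$ is an integer, this is $q\ge\lceil (m+2)/2\rceil=\lceil (N+2)/2\rceil$, which is \cref{eq:threshold}.

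For the converse direction I would exhibit a witness explicitly. When $q<\lceil(N+2)/2\rceil$ we have $q-1<m-q+1$, so $\lambda=q-\tfrac12$ lies in the central interval. By \cref{thm:main} it is a relaxed zero, yet it is not an eigenvalue. By \cref{prop:projection}, or simply by $0\le E_{(2,q)}\le E_q$, the same holds for the $(2,q)$ formulation. The asymptotic statement $q=\Omega(N)$ is then immediate, since the threshold is $N/2+O(1)$.

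There is no real obstacle here. The only points needing care are the parity bookkeeping in the ceiling, including the check that $q\ge\lceil (N+2)/2\rceil$ is compatible with $q\le m$ for $m\ge2$, and stating the monotonicity in $q$ so that the "if" direction also covers $q$ beyond the range of the theorem.
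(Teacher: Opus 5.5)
Your proposal is correct and follows the paper's argument. You read off from \cref{thm:main} that the only noninteger zeros lie in $[q-1,N-q+1]$, then observe that this interval with integer endpoints contains a noninteger exactly when $2q<N+2$. Your explicit witness $\lambda=q-\tfrac12$ and the monotonicity remark covering $q>m$ are sound additions that the paper leaves implicit.
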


\begin{proof}
For $N=m$, the only noninteger relaxed zeros are in
$[q-1,N-q+1]$.  This interval contains no noninteger point exactly when
$2q\ge N+2$.  Then the interval is empty or, for even $N$ at the threshold,
the single integer $q-1=N/2$; for example $N=4$, $q=3$ gives
$\{0,1\}\cup\{2\}\cup\{3,4\}$.  In either case \cref{eq:zero-set} is
exactly $\{0,1,\ldots,N\}$.
\end{proof}

\begin{figure}[t]
 \centering
 \includegraphics[width=.82\linewidth]{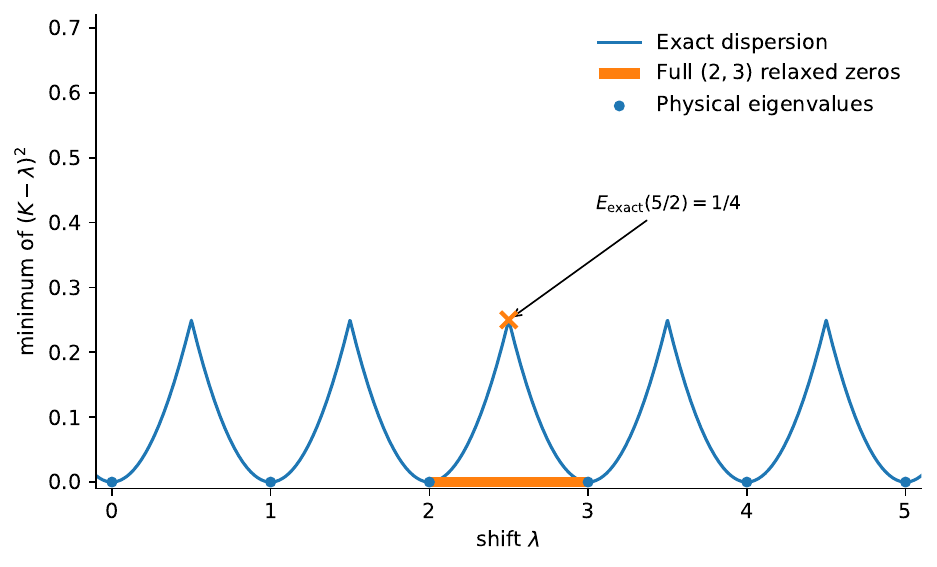}
 \caption{Minimal full-three-positivity failure.  The exact dispersion energy
 is the squared distance to the integer spectrum.  Full order-three
 positivity has zero energy at every point of $[2,3]$, including
 $\lambda=5/2$, where the exact value is $1/4$.}
 \label{fig:minimal}
\end{figure}

\begin{figure}[t]
 \centering
 \includegraphics[width=.90\linewidth]{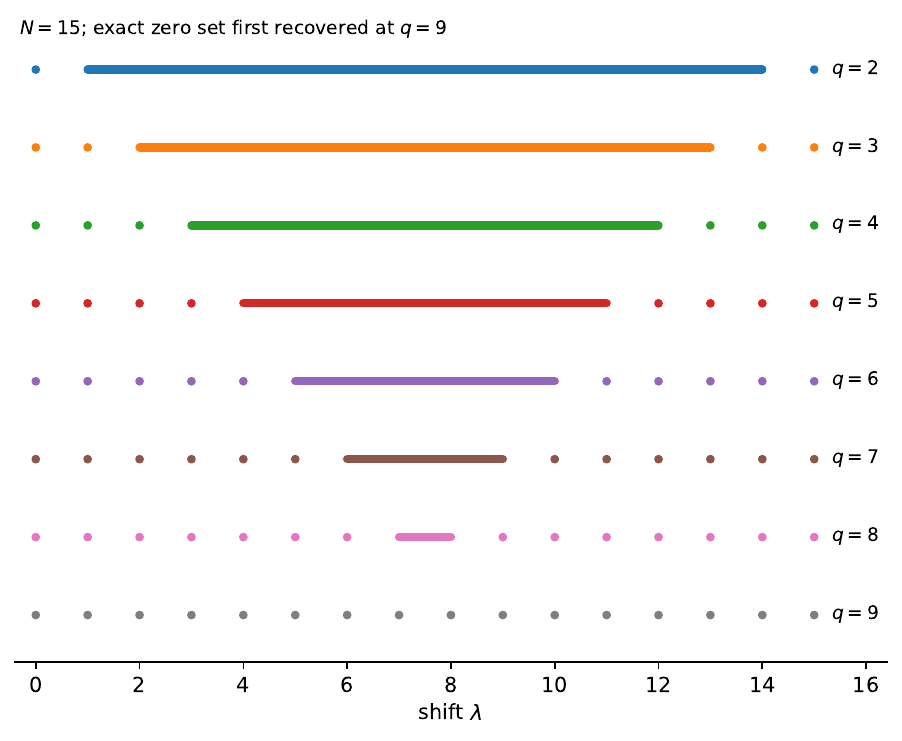}
 \caption{Zero sets from \cref{thm:main} for $N=15$.  Increasing $q$ recovers
 integer spectral points from the two edges, while a central false continuum
 remains until the linear threshold $q=9$ of \cref{eq:threshold}, where the
 zero set equals the spectrum.}
 \label{fig:hierarchy}
\end{figure}

\subsection{Numerical audit}

The proof is analytic.  The Supplementary Material reports independent
finite-matrix audits.  For $(r,N,q,\lambda)=(10,5,3,5/2)$, the raw
particle--hole blocks $D_1$ through $Q_3$, built from determinant-basis
creation and annihilation operators (dimensions $120$, $450$, $450$, and
$120$ at third order), are positive semidefinite to numerical precision,
while $D_3$ and $Q_3$ have the eigenvalue $-1/160$ at $\lambda=3/2$ and
$7/2$.  For the $q=3,4,5$ midpoints, all $35$, $70$, and $126$ classes of
sub-blocks (Appendix~\ref{app:general}) are positive semidefinite, whereas at
the noninteger points $\lambda=q-\tfrac32$ and $m-q+\tfrac32$ just outside
the interval a $q$-electron or $q$-hole probability is negative.  Integer
points outside the interval are physical eigenvalues and give positive
semidefinite matrices.  A reproduction package regenerates all audits,
tables, and figures.

\section{Scope of the result}
\label{sec:scope}

\subsection{Integrality is present in the two-body space}
\label{sec:integrality}

The lower bound concerns the full $q$-positivity hierarchy, not every method
that uses the 2-RDM.  Because the spectrum of $K$ is integer, the two-body
operators
\[
 B_j=(K-j\id)(K-(j+1)\id),\qquad j=0,\ldots,N-1,
\]
are nonnegative on every $N$-electron state.  On the pseudo-state, however,
$\langle K\rangle_\lambda=\lambda$ and $\langle K^2\rangle_\lambda=\lambda^2$
give
\[
 \langle B_j\rangle_\lambda=(\lambda-j)(\lambda-j-1)<0
 \qquad(j<\lambda<j+1).
\]
Adding these $N$ linear two-body inequalities makes the entire dispersion
energy exact, not just its zero set.  To see this, let $k$ be a nearest
integer to $\lambda$ in $\{0,\ldots,m\}$ and put $\delta=\lambda-k$.
For $1\le k\le m-1$, $|\delta|\le1/2$ and
\begin{align*}
 (K-\lambda)^2-\delta^2
 &=\left(\tfrac12-\delta\right)B_{k-1}
 +\left(\tfrac12+\delta\right)B_k.
\end{align*}
Both coefficients are nonnegative.  At the endpoints the identities are
\begin{gather*}
 (K-\lambda)^2-\lambda^2=B_0+(1-2\lambda)K,
 \quad\lambda\le\tfrac12,\\
 \begin{aligned}
 (K-\lambda)^2-(m-\lambda)^2
 &=B_{m-1}\\
 &\quad+[1+2(\lambda-m)](m-K),
 \end{aligned}\\
 \lambda\ge m-\tfrac12.
\end{gather*}
Using the existing bounds $0\le\langle K\rangle\le m$ gives the same
lower bound as the exact dispersion; a physical eigenstate at $k$ attains
it.  The integrality information needed is therefore present
in the two-body space; the obstruction is that some of these inequalities
are not generated by the sums of squares of level $q$.  Every element of
$\coneN$ is nonnegative on $\langle\cdot\rangle_\lambda$, so choosing
$\lambda=j+\tfrac12$ in the false interval $[q-1,N-q+1]$ shows
\[
 B_j\notin\coneN\qquad(j=q-1,\ldots,N-q).
\]
The outermost inequalities are not obstructed: for example
\[
 B_0=K(K-1)=2\sum_{i<j\in A}(a_ja_i)^\dagger(a_ja_i)\in\cC_{2,2}^{[N]},
\]
and $B_{N-1}$ is the corresponding sum of hole-pair squares.  The theorem is
thus a statement about this hierarchy, not about the information content of
the 2-RDM or about the difficulty of the counting model.

The false continuum also shows where the logic of the dispersion test
breaks.  For a genuine state,
\[
 \Tr\rho(K-\lambda)^2=0
 \quad\Longrightarrow\quad
 (K-\lambda)\rho^{1/2}=0,
\]
so the state lies in the $\lambda$ eigenspace.  A truncated positive
functional enforces only finitely many of the consequences of this
implication.  It can make $K-\lambda$ a null vector of its order-two moment
matrix without being the restriction of any state.

\subsection{Relation to classical occupation-number relaxations}
\label{sec:relation}

Because the pseudo-state is diagonal, every quantity in the positivity
conditions is a probability for a pattern of occupied and empty orbitals, or
a linear combination of such probabilities.  The problem is therefore a
problem about a classical distribution of occupation numbers
$n_i\in\{0,1\}$ with $\sum_{i\in A}n_i=\lambda$.  This is the setting of
known lower bounds for sum-of-squares relaxations of $0/1$ problems
\cite{Grigoriev2001,Lee2016,Potechin2019}, where it is called the knapsack
or Boolean setting.  In chemical language, the classical relaxation of
degree $d$ requires $\langle f^2\rangle\ge0$ for every polynomial $f$ of
degree at most $d$ in the occupation numbers, that is, positive
semidefiniteness of the occupation-correlation matrices whose rows are
products of at most $d$ occupation numbers.  The pseudo-state of
Sec.~\ref{sec:pseudo} is the standard one in this literature, and the need
for a degree linear in the number of variables is already known for this
simple constraint.  We now state how the fermionic hierarchy relates to
these classical relaxations.

Let $\mathcal B_q^{[N]}$ denote the classical relaxation that imposes, for a
normalized functional on the diagonal operators of $\cH_N$,
(i) $\langle f^2\rangle\ge0$ for every occupation-number polynomial $f$ of
degree at most $q-1$, and (ii) nonnegativity of every occupation-pattern
probability
$\langle\prod_{i\in U}n_i\prod_{j\in V}(1-n_j)\rangle$ with
$|U|+|V|\le q$.  Only occupation moments through degree $2q-2$ are
needed.  On the nonempty central interval $2q-2\le m$, so all these
moments of the fixed-$N$ interpolation \cref{eq:fixedN-interpolation}
agree with the product evaluation formula of Sec.~\ref{sec:pseudo}.

\begin{proposition}[Classical form of the zero set]
\label{prop:classical}
For the counting model and $2\le q\le m$, the zero set $Z_q(K)$ of full
$q$-positivity equals the zero set of $\mathcal B_q^{[N]}$.
\end{proposition}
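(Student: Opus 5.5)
Both inclusions will be proved by rerunning the two halves of the proof of \cref{thm:main} inside $\mathcal B_q^{[N]}$. Write $Z^{\mathrm{cl}}_q$ for the zero set of $\mathcal B_q^{[N]}$, that is, the set of $\lambda$ for which some normalized functional in $\mathcal B_q^{[N]}$ satisfies $\langle(K-\lambda)^2\rangle=0$. \Cref{thm:main} already identifies $Z_q(K)$ with the right-hand side of \cref{eq:zero-set}, so it suffices to show that $Z^{\mathrm{cl}}_q$ equals the same set.

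\emph{Inclusion $Z_q(K)\subseteq Z^{\mathrm{cl}}_q$.} Integers are attained by physical diagonal states, which lie in $\mathcal B_q^{[N]}$. For $\lambda$ in the central interval $[q-1,m-q+1]$, I take the classical restriction of the pseudo-state \cref{eq:fixedN-interpolation} to diagonal operators. Its occupation-pattern probabilities on at most $q$ orbitals are products of falling factorials given by \cref{eq:pseudo-values}, and they are nonnegative on this interval. This is exactly the weight analysis of Sec.~\ref{sec:arbitrary-order}. Positivity of $\langle f^2\rangle$ for occupation polynomials of degree at most $q-1$ is \cref{lem:slice-positive}. Both checks only need moments through degree $2q-2\le m$, so the product formula is valid. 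The dispersion vanishes by the computation in the sufficiency part of \cref{thm:main}. Hence the right-hand side of \cref{eq:zero-set} is contained in $Z^{\mathrm{cl}}_q$.

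\emph{Inclusion $Z^{\mathrm{cl}}_q\subseteq Z_q(K)$.} I will check that every step of Sec.~\ref{sec:necessity} uses only ingredients available in $\mathcal B_q^{[N]}$.

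First, $\langle(K-t)^2\rangle\ge0$ is a degree-one square, and $q-1\ge1$. As in the proof of \cref{thm:main}, this gives \cref{eq:first-two}.

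Second, the localization step needs $\langle\chi_\varepsilon(K-\lambda)^2\rangle\ge0$ for patterns on $|T|\le q-2$ orbitals. Since $\chi_\varepsilon^2=\chi_\varepsilon$ for Boolean variables, this expectation equals $\langle(\chi_\varepsilon(K-\lambda))^2\rangle$. The polynomial $\chi_\varepsilon(K-\lambda)$ has degree at most $|T|+1\le q-1$, so this is condition (i). Summing over patterns then yields \cref{eq:localized-zero}, the recursion, and \cref{eq:powers} up to $s=q$; all moments involved have degree at most $q$.

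Third, $\langle n_S\rangle\ge0$ and $\langle\prod_{i\in S}(1-n_i)\rangle\ge0$ for $|S|\le q$ are occupation-pattern probabilities, so they are provided by condition (ii). This gives \cref{eq:falling-positive}, and then \cref{eq:falling-character} yields the zero set. One bookkeeping point: the classical functional lives on diagonal operators of $\cH_N$, so $\sum_{i\in A}(1-n_i)=m-K$ and the fixed-$N$ identity $\sum_{i\in B}n_i=m-K$ hold automatically.

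The main obstacle is the localization step, which must be a genuine classical square of degree at most $q-1$. The reduction $\chi_\varepsilon^2=\chi_\varepsilon$ settles this, so I expect the argument to be routine once that is made explicit.

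Finally, consistent with the paper's caveat, I will emphasize that this argument only compares zero sets, not feasible sets. The first inclusion uses a single specific pseudo-state, and the second uses only the necessity chain.
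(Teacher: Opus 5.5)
Your proposal is correct. The inclusion of $Z_q(K)$ in the zero set of $\mathcal B_q^{[N]}$ follows the paper: integers are physical, and on the central interval the diagonal restriction of $\widehat{\mathcal L}_\lambda$ satisfies (i) and (ii).

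One small precision there. Condition (i) involves polynomials in the occupation numbers of both shells, while \cref{lem:slice-positive} is a single-shell statement. You therefore need the step from the proof of \cref{lem:general-positivity}: write $\langle|f|^2\rangle_\lambda=c^\dagger(M_A\otimes M_B)c$ and apply the lemma to each shell, with pseudo particle numbers $\lambda$ and $m-\lambda$.

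For the reverse inclusion you take a genuinely different route. The paper does not recompute the zero set of $\mathcal B_q^{[N]}$. Instead it extends an arbitrary $\mathcal B_q^{[N]}$-feasible functional by the phase average \cref{eq:twirl}. It then uses \cref{lem:car-sector} and the proof of \cref{lem:general-positivity} to show that this extension is full positive through order $q$. The point is that every sub-block quadratic form is either $\langle|\chi_{U,V}p|^2\rangle$ with $\deg(\chi_{U,V}p)\le q-1$, or a single pattern probability on at most $q$ orbitals. Hence every classical zero is a fermionic zero.

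You instead rerun the necessity chain of \cref{thm:main} inside $\mathcal B_q^{[N]}$, and your check is sound. The identity $\chi_\varepsilon^2=\chi_\varepsilon$ makes each localized dispersion a square of degree at most $q-1$, and the falling-factorial inequalities come from (ii). The paper states this observation only as a remark after its proof.

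Your route avoids the sub-block lemma and the phase average entirely. The cost is that equality of zero sets comes only from computing both sets explicitly and comparing each with \cref{eq:zero-set}. The paper's route gives a feasible-set statement: every $\mathcal B_q^{[N]}$-feasible functional lifts into $\cF_q^{\mathrm{full},[N]}$ with the same dispersion. So the fermionic minimum dispersion never exceeds the classical one at any $\lambda$, and that inclusion of zero sets needs no knowledge of where the zeros lie.
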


\begin{proof}
A functional satisfying $\mathcal B_q^{[N]}$, extended to all operators by the
phase average \cref{eq:twirl}, is full positive through order $q$: by
\cref{lem:car-sector}, every quadratic form of a positivity block is a sum
of terms $\langle\chi_{U,V}|p|^2\rangle=\langle|\chi_{U,V}p|^2\rangle$, where
$\chi_{U,V}$ is an occupation pattern and $p$ an occupation-number
polynomial, and the proof of \cref{lem:general-positivity} shows that each
term is controlled by (i) or (ii).  The dispersion operator is diagonal, so
the zero set of $\mathcal B_q^{[N]}$ is contained in $Z_q(K)$.  Conversely, the
fixed-$N$ functional \cref{eq:fixedN-interpolation} satisfies $\mathcal B_q^{[N]}$
on the central interval: its required moments equal the product formula,
whose positivity is established in \cref{lem:general-positivity}.
Integer values are physical, so the
right-hand side of \cref{eq:zero-set}, which equals $Z_q(K)$ by
\cref{thm:main}, is contained in the zero set of $\mathcal B_q^{[N]}$.
\end{proof}

The necessity part of \cref{thm:main} also uses only constraints of
$\mathcal B_q^{[N]}$: the localized dispersions
$\langle\chi_T(K-\lambda)^2\rangle\ge0$ with $|T|\le q-2$ are squares of
polynomials of degree at most $q-1$, and the remaining inputs are
probabilities of $q$ occupied or $q$ empty orbitals.

Thus the two relaxations have the same dispersion zero set for this model,
not equivalent conditions on a functional.  Indeed, the full three-positive
signed mixture in Sec.~\ref{sec:energy} gives
$\langle[(K-1)(K-2)]^2\rangle=-1/5$, and hence fails classical degree-two
positivity.  The degree $q-1$ in the sufficient condition follows from the
row length $|U|+|V|+2|P|$ of $a_V^\dagger a_Un_P$: each occupation number
costs a creation and an annihilation operator.

For $2(q-1)\le m$, classical degree-$(q-1)$ positivity of a shell
pseudo-state holds at noninteger particle numbers in the wider range
$q-2<\lambda<m-q+2$ (Appendix~\ref{app:general}).  The additional
occupation-pattern probabilities restrict the noninteger values to the
central interval of \cref{thm:main}.  For example, at $m=5$, $q=3$,
$\lambda=3/2$, the degree-two occupation-correlation matrix is positive
semidefinite, but $\langle n_1n_2n_3\rangle=-1/160$ violates $D_3$.
The sufficiency proof uses both kinds of conditions.

The fermionic level $q$ does not reach the classical degree $q$.  Consider
the standard classical relaxation of degree $q$ for a single shell of $m$
orbitals: normalized functionals on polynomials in commuting variables
$x_i\in\{0,1\}$ with
\begin{align*}
 &\langle1\rangle=1,\\
 &\langle p^2\rangle\ge0\quad(\deg p\le q),\\
 &\Bigl\langle\Bigl(\sum_ix_i-\lambda\Bigr)h\Bigr\rangle=0\quad(\deg h\le2q-1).
\end{align*}
The polynomials $p$ have degree at most $q$, so $p^2$ and the moments
involved have degree at most $2q$.  Averaging over permutations of the
variables preserves these conditions, so a solution may be taken symmetric,
$\langle x_S\rangle=e_{|S|}$ with $e_0=1$, and the linear conditions with
$h=x_S$ give
\[
 (k-\lambda)e_k+(m-k)e_{k+1}=0\qquad(|S|=k),
\]
hence $e_k=\fall{\lambda}{k}/\fall{m}{k}$ for $k\le m$.  For $m=2q-1$ the
monomial $h=x_1\cdots x_m$ has the admissible degree $2q-1$, and since
$x_i^2=x_i$ the corresponding condition reads
\[
 0=(m-\lambda)e_m=(m-\lambda)\frac{\fall{\lambda}{m}}{m!},
\]
which fails for $\lambda=q-\tfrac12$.  The classical degree-$q$ relaxation is
therefore infeasible at the threshold for every $q$.  Infeasibility does
not rest on this last condition alone: if the linear conditions are imposed
only for $\deg h\le2q-2$, the recursion still fixes the only candidate,
\[
 M_{P,Q}=\frac{\fall{\lambda}{|P\cup Q|}}{\fall{m}{|P\cup Q|}},
 \qquad |P|,|Q|\le q,
\]
which is the single-shell pseudo-state itself, and its smallest eigenvalue
is $-0.01428$ for $q=3$ ($m=5$, $26\times26$) and $-0.00416$ for $q=4$
($m=7$, $99\times99$).  At the same point the fermionic level $q$ still has
a false zero.  \Cref{tab:boolean-vs-fermion} summarizes the comparison.

\begin{table*}
\caption{Classical occupation-number relaxations and the fermionic
hierarchy for the counting model.  ``Classical degree $d$'' requires
$\langle f^2\rangle\ge0$ for every polynomial $f$ of degree at most $d$ in
the occupation numbers \cite{Grigoriev2001,Lee2016,Potechin2019}; a pattern
probability is $\langle\prod_{i\in U}n_i\prod_{j\in V}(1-n_j)\rangle$.}
\label{tab:boolean-vs-fermion}
\begin{ruledtabular}
\begin{tabular}{lll}
Relaxation & Constraints & Relation to full $q$-positivity \\
\hline
Classical degree $q-1$ & \parbox[t]{0.36\linewidth}{\raggedright $\langle f^2\rangle\ge0$ for $\deg f\le q-1$} & \parbox[t]{0.36\linewidth}{\raggedright implied by $\mathcal B_q^{[N]}$; alone not sufficient} \\
$\mathcal B_q^{[N]}$ & \parbox[t]{0.36\linewidth}{\raggedright degree $q-1$, and pattern probabilities on at most $q$ orbitals} & \parbox[t]{0.36\linewidth}{\raggedright same zero set (\cref{prop:classical})} \\
Full $q$-positivity & \parbox[t]{0.36\linewidth}{\raggedright all particle--hole blocks, rows of length at most $q$} & \parbox[t]{0.36\linewidth}{\raggedright ---} \\
Classical degree $q$ & \parbox[t]{0.36\linewidth}{\raggedright $\langle f^2\rangle\ge0$ for $\deg f\le q$} & \parbox[t]{0.36\linewidth}{\raggedright implies $\mathcal B_q^{[N]}$; infeasible at $m=2q-1$, $\lambda=q-\frac12$} \\
\end{tabular}
\end{ruledtabular}
\end{table*}

What is new, therefore, is not the pseudo-state or its positivity, which are
known, but three things: the transfer of classical positivity and pattern
nonnegativity to all fermionic particle--hole blocks through the sub-block
structure; the necessity argument, which shows that the fermionic
constraints exclude every other value; and the exact zero set and sharp
threshold that follow.

\subsection{Zero set versus dispersion energy}
\label{sec:energy}

\Cref{thm:main} determines only the zero set, not the function
$E_q(\lambda)$ away from zero.  For a one-body operator $h$ and a fixed,
normalized RDM relaxation $\mathcal R$, the minimum dispersion is
\begin{equation}
 F_{\mathcal R}(\lambda)=\lambda^2+
 \min_{\mathcal L\in\mathcal R}
 \bigl[\mathcal L(h^2)-2\lambda\mathcal L(h)\bigr].
 \label{eq:dispersion-envelope}
\end{equation}
Each fixed feasible functional gives a parabola, but their lower envelope
need not consist of parabolic branches, because the minimizing RDM can vary
continuously with $\lambda$.  Exact minimization gives instead the lower
envelope of the finitely many parabolas centered at the physical
eigenvalues.

Recovery of the zero set does not imply that the dispersion energy is
exact.  Take $m=N=4$ and $q=3$, for which \cref{thm:main} gives
$Z_3(K)=\{0,1,2,3,4\}$, and let $\rho_k$ be the equal mixture of all
determinants with $K=k$.  The signed combination
\[
 \widetilde\rho=-\tfrac1{20}\rho_0+\tfrac35\rho_1+\tfrac9{20}\rho_2
\]
is not a density matrix, but it is diagonal and symmetric within each
shell, and its expectation values are full three-positive.  This was
verified in exact rational arithmetic: by \cref{lem:car-sector}, full
three-positivity reduces to 35 classes of sub-blocks of dimension at most
9, and a Schur-complement test shows that all of them are positive
semidefinite (Supplementary Material).  At $\lambda=3/2$,
\[
 \Tr\widetilde\rho\,(K-\tfrac32)^2=\tfrac3{20}<\tfrac14
 =E_{\mathrm{exact}}(\tfrac32),
\]
so $E_3(3/2)\le3/20$.  Throughout, ``recovery'' refers to the zero set
only.

This distinction is already visible in the Type-C example of
Nakata \emph{et al.} \cite{Nakata2006}, particularly Fig.~5 of its electronic
supplement.  The PQG calculation exhibits a zero continuum on $[-11,1]$
and nonparabolic shoulders, and the reported identification of ten out of
nineteen eigenvalues refers to isolated zeros, the remaining eigenvalues
being embedded in the continuum.  Those numerical profiles are consistent
with, but not quantitatively described by, the present theorem, which
treats a different Hamiltonian and the complete hierarchy.

\subsection{Small perturbations}

A generic splitting of the one-body levels destroys the exact permutation
symmetry.  A first quantitative statement is nevertheless immediate, at a
fixed system size.  Let $h_\varepsilon=K+\varepsilon V$ with $V$ a
Hermitian, particle-number conserving one-body operator, and let $\lambda$
lie in the central interval of \cref{thm:main}.  Since
$\langle(K-\lambda)^2\rangle_\lambda=0$ and the pseudo-state is nonnegative on
squares of one-body operators, the Cauchy--Schwarz inequality removes the
cross terms, and
\[
 \langle(h_\varepsilon-\lambda)^2\rangle_\lambda
 =\varepsilon^2\langle V^2\rangle_\lambda .
\]
Hence the relaxed dispersion obeys
$0\le E_q^{(\varepsilon)}(\lambda)\le\varepsilon^2\langle
V^2\rangle_\lambda$ on the whole interval.  The exact dispersion, in
contrast, stays finite at half-integers: with $\|V\|$ the operator norm on
$\cH_N$, every eigenvalue of $h_\varepsilon$ lies within $|\varepsilon|\|V\|$
of an integer, so
\[
 E_{\mathrm{exact}}^{(\varepsilon)}(\lambda)\ge
 \Bigl(\tfrac12-|\varepsilon|\|V\|\Bigr)^2
 \qquad\bigl(\lambda\in\mathbb Z+\tfrac12,\ |\varepsilon|\|V\|<\tfrac12\bigr).
\]
A low-dispersion valley of depth $O(\varepsilon^2)$ therefore survives a
small splitting, whereas the exact dispersion remains close to $1/4$.
These bounds hold at fixed $N$; if $\|V\|$ grows with $N$, the admissible
$\varepsilon$ shrinks, and no robustness uniform in system size follows.
Whether exact zeros survive requires a separate analysis; the multilevel
examples of Ref.~\onlinecite{Nakata2006} show that splitting alone need not
eliminate a zero interval under PQG constraints.

\subsection{Density--density Hamiltonians}
\label{sec:density-density}

The dispersion operator is a simple density--density Hamiltonian.
Expanding,
\begin{equation}
 H_\lambda
 =
 \sum_{i\in A}(1-2\lambda)n_i
 +2\sum_{\substack{i<j\\i,j\in A}}n_in_j
 +\lambda^2\id .
 \label{eq:expand-H}
\end{equation}
On the fixed-$N$ sector, one-body and constant terms can be rewritten with
$n_in_j$ using
\begin{equation}
 n_i=\frac{1}{N-1}\sum_{j\ne i}n_in_j,\qquad
 \id=\frac{1}{\binom N2}\sum_{i<j}n_in_j,
 \label{eq:fixedN-rewrite}
\end{equation}
so $H_\lambda$ lies in the class $\sum_{ij}e_{ij}n_in_j$ emphasized in
diagonal $N$-representability.  For the midpoint $\lambda=N/2$, with the
pair counts
$P_{AA}=\sum_{i<j\in A}n_in_j$, $P_{BB}=\sum_{i<j\in B}n_in_j$, and
$P_{AB}=\sum_{i\in A,j\in B}n_in_j$, one has
\begin{equation}
 \begin{aligned}
 \left(K-\frac N2\right)^2
 ={}&\frac{N}{2(N-1)}(P_{AA}+P_{BB})\\
 &{}-\frac{N-2}{2(N-1)}P_{AB},
 \end{aligned}
 \label{eq:midpoint-density-density}
\end{equation}
so the $q=3,4,5$ examples of \cref{tab:explicit-examples} are
fixed-magnetization Ising Hamiltonians with a completely explicit
spectrum.  More general fixed-cardinality quadratic $0/1$ objectives include
Max-Cut and Ising ground-state problems, which are NP-hard
\cite{Garey1976,Barahona1982}, and weighted Max-Bisection, which remains
NP-hard on restricted graph classes \cite{DiazKaminski2007}.
Since full $q$-positivity has polynomial size $r^{O(q)}$ at fixed $q$,
exactness of one fixed level for every density--density instance would,
subject to the usual promise-gap and bit-complexity qualifications, solve an
NP-hard family in polynomial time.  That conditional observation shows only
that no fixed level can be exact for all such Hamiltonians.  \Cref{cor:linear}
is different: it proves linear growth, unconditionally, for an instance whose
exact spectrum and optimum are elementary.

\subsection{Open problems}

\Cref{cor:linear} gives the sharp threshold for this family, asymptotically
$N/2$; it does not assert that $q=N$ is necessary for all density--density
Hamiltonians.  A stronger worst-case bound would require a different
explicit family.  Extending the construction to split levels would require a
weighted subset-sum pseudo-state instead of the symmetric one used here.
More generally, quantifying how much the off-diagonal particle--hole blocks
add to the classical occupation-number relaxation of
Sec.~\ref{sec:relation}, and determining the minimal rank of complete polar
certificates, would clarify when a polar implementation can offer more
than a primal--dual exchange of $q$-body objects.

\section{Conclusion}

Complete full $(2,q)$-positivity does not, at fixed order, recover the
many-particle spectrum of even an elementary one-body counting operator.  In
a balanced $N$-particle system it accepts an exactly characterized interval
of noninteger values as zero-dispersion eigenvalues, and removing that
interval requires
\[
 q\ge\left\lceil\frac{N+2}{2}\right\rceil .
\]
The pseudo-RDMs adapt a known classical construction to all particle--hole
blocks at arbitrary order.  Together with the necessity proof, this gives
the complete zero set and a sharp recovery threshold.  For this counting
model the same zero set follows from classical positivity of squares of
occupation polynomials of degree at most $q-1$ \emph{together with}
nonnegative occupation patterns on at most $q$ orbitals; square positivity
alone is insufficient.  The verified primal--polar equivalence transfers
the result to complete $(2,q)$-positivity.

\section*{Supplementary Material}
See the supplementary material for the exact algebraic and finite-matrix
audits, the exact positivity check of the signed mixture, an example of a
duality gap without strict feasibility, and the reproduction package.

\section*{Author Declarations}

\subsection*{Conflict of Interest}
The author has no conflicts to disclose.

\subsection*{Author Contributions}
Maho Nakata: Conceptualization (lead); Formal analysis (lead); Investigation
(lead); Methodology (lead); Software (lead); Validation (lead); Visualization
(lead); Writing--original draft (lead); Writing--review and editing (lead).

\subsection*{Use of Artificial Intelligence}
GPT-6 Astra (OpenAI) and Claude Opus 5.5 (Anthropic) were used to
assist with literature search and English-language drafting, and as aids in
the course of the research.  The author takes full responsibility for the
content of the manuscript.

\section*{Data Availability}
All data supporting the findings of this study are contained within the
article and its supplementary material.  The supplementary material includes
a reproduction package with the verification scripts, their logs, the
figure scripts, the \LaTeX{} sources, and a Makefile; \texttt{make verify},
\texttt{make figures}, and \texttt{make pdf} regenerate all reported audits,
tables, figures, and documents.

\appendix

\section{Guide to notation}
\label{app:notation}

Tables~\ref{tab:notation-system}--\ref{tab:notation-cones} collect the main
symbols, grouped by their physical or mathematical role.  In particular,
$q$ counts creation and annihilation operators in a positivity-test row,
whereas $d$ counts occupation numbers in a polynomial.  The model has
$N=m$ and $r=2m$; the symbols $M$ and $\theta$ refer instead to an auxiliary
single shell.  Pseudo-expectations specify RDM entries but need not come
from a positive many-electron density operator.

\begin{table*}[t]
\caption{System, reduced density matrices, and hierarchy level.}
\label{tab:notation-system}
\begin{ruledtabular}
\begin{tabular}{ll}
\parbox[t]{0.22\textwidth}{\raggedright Symbol} & \parbox[t]{0.72\textwidth}{\raggedright Meaning} \\
\hline
\parbox[t]{0.22\textwidth}{\raggedright $r$, $N$} & \parbox[t]{0.72\textwidth}{\raggedright Number of spin orbitals and fixed total number of electrons.} \\
\parbox[t]{0.22\textwidth}{\raggedright $a_i^\dagger$, $a_i$, $n_i$} & \parbox[t]{0.72\textwidth}{\raggedright Creation, annihilation, and occupation-number operators; $n_i=a_i^\dagger a_i$.} \\
\parbox[t]{0.22\textwidth}{\raggedright $\widehat N$, $\id$} & \parbox[t]{0.72\textwidth}{\raggedright Total number operator $\sum_i n_i$ and identity operator.} \\
\parbox[t]{0.22\textwidth}{\raggedright $\rho_N$, $H^{[N]}$} & \parbox[t]{0.72\textwidth}{\raggedright Many-electron density operator and lifted Hamiltonian on $\cH_N$; $[N]$ labels the space, not interaction order.} \\
\parbox[t]{0.22\textwidth}{\raggedright $\cH_N$, $P_N$, $\mathcal F$} & \parbox[t]{0.72\textwidth}{\raggedright $N$-electron Hilbert space, its projector, and fermionic Fock space.} \\
\parbox[t]{0.22\textwidth}{\raggedright $C$, $\cW_{\le q}$} & \parbox[t]{0.72\textwidth}{\raggedright Test row operator and the span of rows $CP_N$ with at most $q$ creation/annihilation factors; rows may change particle number.} \\
\parbox[t]{0.22\textwidth}{\raggedright $q$} & \parbox[t]{0.72\textwidth}{\raggedright Fermionic positivity level: $\langle C^\dagger C\rangle\ge0$ for every $C\in\cW_{\le q}$.} \\
\parbox[t]{0.22\textwidth}{\raggedright $d$} & \parbox[t]{0.72\textwidth}{\raggedright Degree of an occupation-number polynomial: the largest number of factors $n_i$ in a monomial.  Each $n_i$ uses two fermionic factors.} \\
\parbox[t]{0.22\textwidth}{\raggedright $\cL$, $\langle O\rangle$} & \parbox[t]{0.72\textwidth}{\raggedright Linear assignment of retained operator expectations, with $\langle O\rangle=\cL(O)$; not assumed to be a physical state.} \\
\parbox[t]{0.22\textwidth}{\raggedright $D_2$} & \parbox[t]{0.72\textwidth}{\raggedright The two-particle RDM block; in $\Tr[HD_2]$ and $\pN(\cL)$, shorthand for normalization together with the 1- and 2-RDM data.} \\
\parbox[t]{0.22\textwidth}{\raggedright $\Tr[HD_2]$} & \parbox[t]{0.72\textwidth}{\raggedright Contraction of a zero-, one-, and two-body operator $H$ with those RDM data, not a full many-electron density matrix trace.} \\
\parbox[t]{0.22\textwidth}{\raggedright $D_2,Q_2,G_2$} & \parbox[t]{0.72\textwidth}{\raggedright Two-particle, two-hole, and particle--hole positivity blocks.} \\
\parbox[t]{0.22\textwidth}{\raggedright $D_3,E_3,F_3,Q_3$} & \parbox[t]{0.72\textwidth}{\raggedright Third-order blocks with annihilation/creation counts $(3,0)$, $(2,1)$, $(1,2)$, and $(0,3)$ in a row.} \\
\parbox[t]{0.22\textwidth}{\raggedright $T_1,T_2,T_2'$} & \parbox[t]{0.72\textwidth}{\raggedright Selected two-body positivity conditions obtained by cancellation of higher-body terms; not the complete level-three conditions.} \\
\parbox[t]{0.22\textwidth}{\raggedright $X\succeq0$} & \parbox[t]{0.72\textwidth}{\raggedright The Hermitian matrix $X$ is positive semidefinite.} \\
\end{tabular}
\end{ruledtabular}
\end{table*}

\begin{table*}[t]
\caption{Counting model, dispersion test, and pseudo-RDM construction.}
\label{tab:notation-model}
\begin{ruledtabular}
\begin{tabular}{ll}
\parbox[t]{0.22\textwidth}{\raggedright Symbol} & \parbox[t]{0.72\textwidth}{\raggedright Meaning} \\
\hline
\parbox[t]{0.22\textwidth}{\raggedright $A,B$, $m$} & \parbox[t]{0.72\textwidth}{\raggedright Two disjoint shells, each with $m$ spin orbitals; the model fixes $N=m$ electrons in $r=2m$ orbitals.} \\
\parbox[t]{0.22\textwidth}{\raggedright $K$, $k$, $\lambda$} & \parbox[t]{0.72\textwidth}{\raggedright Shell-$A$ count $K=\sum_{i\in A}n_i$, its integer eigenvalue $k$, and the real trial spectral value $\lambda$.} \\
\parbox[t]{0.22\textwidth}{\raggedright $H_\lambda$} & \parbox[t]{0.72\textwidth}{\raggedright Dispersion Hamiltonian $(K-\lambda\id)^2$.} \\
\parbox[t]{0.22\textwidth}{\raggedright $E_{\mathrm{exact}}(\lambda)$} & \parbox[t]{0.72\textwidth}{\raggedright Exact minimum dispersion $\min_{0\le k\le m,\,k\in\mathbb Z}(k-\lambda)^2$.} \\
\parbox[t]{0.22\textwidth}{\raggedright $E_q(\lambda)$, $E_{(2,q)}(\lambda)$} & \parbox[t]{0.72\textwidth}{\raggedright Minimum dispersion under full $q$-positivity and full $(2,q)$-positivity; these values are equal.} \\
\parbox[t]{0.22\textwidth}{\raggedright $Z_q(K)$} & \parbox[t]{0.72\textwidth}{\raggedright Accepted spectral values $\{\lambda:E_q(\lambda)=0\}$, including any spurious zeros.} \\
\parbox[t]{0.22\textwidth}{\raggedright $\rho_k$} & \parbox[t]{0.72\textwidth}{\raggedright Uniform mixture of determinants with $K=k$ and $N=m$.} \\
\parbox[t]{0.22\textwidth}{\raggedright $\widehat{\mathcal L}_\lambda$} & \parbox[t]{0.72\textwidth}{\raggedright Fixed-$N$ signed interpolation, \cref{eq:fixedN-interpolation}.} \\
\parbox[t]{0.22\textwidth}{\raggedright $M$, $\theta$} & \parbox[t]{0.72\textwidth}{\raggedright Orbital count and real pseudo particle number of an auxiliary single shell.} \\
\parbox[t]{0.22\textwidth}{\raggedright $\fall{z}{s}$} & \parbox[t]{0.72\textwidth}{\raggedright Falling factorial $z(z-1)\cdots(z-s+1)$, with $\fall{z}{0}=1$.} \\
\parbox[t]{0.22\textwidth}{\raggedright $n_S$} & \parbox[t]{0.72\textwidth}{\raggedright Occupation product $\prod_{i\in S}n_i$; $n_\varnothing=\id$.} \\
\parbox[t]{0.22\textwidth}{\raggedright $\langle\cdot\rangle_{\theta,M}$} & \parbox[t]{0.72\textwidth}{\raggedright Single-shell pseudo-expectation with $\langle n_S\rangle_{\theta,M}=\fall{\theta}{|S|}/\fall{M}{|S|}$.} \\
\parbox[t]{0.22\textwidth}{\raggedright $\langle\cdot\rangle_\lambda$} & \parbox[t]{0.72\textwidth}{\raggedright Fixed-$N$ pseudo-expectation $\widehat{\mathcal L}_\lambda$, evaluated by the product formula through occupation degree $m$.} \\
\parbox[t]{0.22\textwidth}{\raggedright $\langle\cdot\rangle_{\lambda,\mathrm{prod}}$} & \parbox[t]{0.72\textwidth}{\raggedright Auxiliary product of two shell functionals, \cref{eq:pseudo}; not a fixed-$N$ functional at all degrees.} \\
\parbox[t]{0.22\textwidth}{\raggedright $U,V$, $\chi_{U,V}$} & \parbox[t]{0.72\textwidth}{\raggedright Disjoint sets of removed and added electrons in a row; $\chi_{U,V}=\prod_{i\in U}n_i\prod_{j\in V}(1-n_j)$ is its required occupation pattern.} \\
\parbox[t]{0.22\textwidth}{\raggedright $M_A,M_B$} & \parbox[t]{0.72\textwidth}{\raggedright Shell occupation-correlation matrices; $M$ instead denotes a shell size.} \\
\parbox[t]{0.22\textwidth}{\raggedright $\mathbb E_{\mathrm{diag}}$} & \parbox[t]{0.72\textwidth}{\raggedright Orbital-phase average, retaining only the occupation-basis diagonal part of an operator.} \\
\parbox[t]{0.22\textwidth}{\raggedright $\mathcal B_q^{[N]}$} & \parbox[t]{0.72\textwidth}{\raggedright Classical degree-$(q-1)$ positivity plus nonnegative patterns on at most $q$ orbitals.} \\
\parbox[t]{0.22\textwidth}{\raggedright $B_j$} & \parbox[t]{0.72\textwidth}{\raggedright Integer-spectrum inequality operator $(K-j)(K-j-1)$, nonnegative on physical states.} \\
\end{tabular}
\end{ruledtabular}
\end{table*}

\begin{table*}[t]
\caption{Projection, positivity cones, and semidefinite duality.  A cone
allows unnormalized moments; a feasible set $\cF$ includes normalization.}
\label{tab:notation-cones}
\begin{ruledtabular}
\begin{tabular}{ll}
\parbox[t]{0.22\textwidth}{\raggedright Symbol} & \parbox[t]{0.72\textwidth}{\raggedright Meaning} \\
\hline
\parbox[t]{0.22\textwidth}{\raggedright $\cV_2$} & \parbox[t]{0.72\textwidth}{\raggedright Space of zero-, one-, and two-body operators on the fixed-$N$ space.} \\
\parbox[t]{0.22\textwidth}{\raggedright $\cV_q$, $\cV_q^*$} & \parbox[t]{0.72\textwidth}{\raggedright Real space of Hermitian operators of body order at most $q$ on $\cH_N$, modulo operator identities, and its dual space of expectation assignments.} \\
\parbox[t]{0.22\textwidth}{\raggedright $\pN$, $\pNadj$} & \parbox[t]{0.72\textwidth}{\raggedright Contraction to two-body data and its adjoint lifting into the retained operator space, with fixed-$N$ identities understood.} \\
\parbox[t]{0.22\textwidth}{\raggedright $\cF_q^{\mathrm{full},[N]}$} & \parbox[t]{0.72\textwidth}{\raggedright Normalized moment assignments satisfying all particle--hole positivity blocks through level $q$.} \\
\parbox[t]{0.22\textwidth}{\raggedright $\cF_{2,q}^{\mathrm{polar},[N]}$} & \parbox[t]{0.72\textwidth}{\raggedright Normalized two-body data satisfying every $(2,q)$ condition; equal to $\pN(\cF_q^{\mathrm{full},[N]})$.} \\
\parbox[t]{0.22\textwidth}{\raggedright $\cF_{2,q}^{\mathrm{listed},[N]}$} & \parbox[t]{0.72\textwidth}{\raggedright Two-body data satisfying only the listed classes of $(2,q)$ conditions, potentially a larger feasible set.} \\
\parbox[t]{0.22\textwidth}{\raggedright $\SigmaN$} & \parbox[t]{0.72\textwidth}{\raggedright Cone of sums of squares $C^\dagger C$ with $C\in\cW_{\le q}$, after quotienting operator identities.} \\
\parbox[t]{0.22\textwidth}{\raggedright $\coneN$} & \parbox[t]{0.72\textwidth}{\raggedright Cone of two-body operators whose embeddings belong to $\SigmaN$: the complete $(2,q)$ inequalities.} \\
\parbox[t]{0.22\textwidth}{\raggedright $\cQ_q^{[N]}$, $S$} & \parbox[t]{0.72\textwidth}{\raggedright Unnormalized full $q$-positive moment cone and its two-body projection $S=\pN(\cQ_q^{[N]})$.  This $S$ is a cone, not an orbital subset in $n_S$.} \\
\parbox[t]{0.22\textwidth}{\raggedright $\cP_{2,q}^{\mathrm{polar},[N]}$} & \parbox[t]{0.72\textwidth}{\raggedright Unnormalized two-body moment cone dual to $\coneN$.} \\
\parbox[t]{0.22\textwidth}{\raggedright $\mathcal K^*$, $\overline{\mathcal K}$} & \parbox[t]{0.72\textwidth}{\raggedright For a cone $\mathcal K$, its dual under nonnegative pairing and its closure.  On a linear map, $*$ instead denotes the adjoint.} \\
\parbox[t]{0.22\textwidth}{\raggedright $y$, $\cM_{q,\alpha}(y)$} & \parbox[t]{0.72\textwidth}{\raggedright Moment vector and its particle--hole block; $\alpha$ labels the block.  The index $q$ is sometimes suppressed.} \\
\parbox[t]{0.22\textwidth}{\raggedright $L$, $z$} & \parbox[t]{0.72\textwidth}{\raggedright Map of homogeneous operator identities $Ly=0$ and their free dual multipliers; $L^*z$ vanishes on $\cH_N$.  The map $L$ is distinct from the expectation functional $\cL$.} \\
\parbox[t]{0.22\textwidth}{\raggedright $X_\alpha$} & \parbox[t]{0.72\textwidth}{\raggedright Positive semidefinite dual matrix multiplying positivity block $\alpha$.} \\
\parbox[t]{0.22\textwidth}{\raggedright $c$, $E$} & \parbox[t]{0.72\textwidth}{\raggedright Hamiltonian coefficient vector and scalar energy bound in the raw-coordinate dual certificate \cref{eq:raw-dual}.} \\
\end{tabular}
\end{ruledtabular}
\end{table*}

\clearpage
\section{Raw-coordinate form of the polar equality}
\label{app:raw}

If moments are represented in redundant tensor coordinates rather than in
the CAR quotient, we separate the homogeneous CAR and fixed-$N$ identities,
written $Ly=0$ for unnormalized moment vectors $y$ that include the
identity coordinate, from the normalization $y(\id)=1$.  The full
$q$-positive primal for an objective $c$ is
\[
 \begin{aligned}
 &\min_y\ \langle c,y\rangle\\
 &\text{subject to}\quad
 \cM_\alpha(y)\succeq0,\quad Ly=0,\quad y(\id)=1.
 \end{aligned}
\]
Its dual is to maximize $E$ subject to
\begin{equation}
 c-E\,\id=
 \sum_\alpha\cM_\alpha^*(X_\alpha)+L^*z,
 \qquad X_\alpha\succeq0,
 \label{eq:raw-dual}
\end{equation}
where $z$ is free.  Each $L^*z$ is an operator vanishing on $\cH_N$, so
\cref{eq:raw-dual} states that $H-E$ equals a sum of $q$-level squares up to
an operator that vanishes on $\cH_N$.  If instead the normalization is
merged into affine constraints $Ly=b$, the same certificate reads
$c=\sum_\alpha\cM_\alpha^*(X_\alpha)+L^*z$ with dual value $z^{\mathsf T}b$,
and the vanishing operator is $L^*z-(z^{\mathsf T}b)\id$.  For a two-body
objective, the components of \cref{eq:raw-dual} above body order two must
cancel.  In an operator quotient the term $L^*z$ disappears because the
identities have already been factored out.  This explains precisely where contraction, normalization,
antisymmetry, and fixed-number multipliers enter the equivalence in
\cref{prop:projection}.

\section{Proof of the projection--polar equivalence}
\label{app:polar}

Let $\cV_q$ be the real space of Hermitian operators of body order at most
$q$ on $\cH_N$, including the identity and quotienting CAR and fixed-$N$
operator identities.  Its dual $\cV_q^*$ consists of real linear assignments
of expectations.  Thus a moment vector $y$ lies in $\cV_q^*$, whereas a
two-body observable lies in $\cV_2$.  Write the particle--hole Gram map as
$\cM:\cV_q^*\to\bigoplus_\alpha\mathbb H^{d_\alpha}$ and set
\[
 \cQ_q^{[N]}=\{y:\cM(y)\succeq0\}.
\]
The following finite-dimensional facts provide the needed closedness;
neither is inferred merely from being a semidefinite projection.

\paragraph*{Boundedness of normalized moments.}
In an antisymmetric index basis, every retained particle block obeys
\[
 D_p\succeq0,\qquad \Tr D_p=\binom Np\,y(\id).
\]
For $y(\id)=1$, this bounds every eigenvalue and hence every entry of $D_p$.
All remaining particle--hole entries are CAR-linear combinations of these
normal-ordered moments.  The normalized feasible set is therefore bounded;
it is also closed, since its constraints are linear equalities and matrix
positivity conditions.  Consequently
\[
 T=\pN(\cF_q^{\mathrm{full},[N]})
\]
is compact and convex.  For an unnormalized feasible $y$, positivity gives
$y(\id)\ge0$.  If $y(\id)=0$, every $D_p$ has zero trace and is positive
semidefinite, so all its entries vanish; the CAR identities then give
$y=0$.  Every nonzero member of $\cQ_q^{[N]}$ can thus be normalized by
dividing by $y(\id)>0$.

\paragraph*{Closedness of the sum-of-squares cone.}
In each particle-number-change sector choose independent row operators
$C_{\alpha i}P_N$, removing those that vanish on $\cH_N$.  The completely
mixed state $\rho_*=\id_{\cH_N}/\dim\cH_N$ has positive definite Gram blocks
\[
 (G_\alpha)_{ij}=\Tr(\rho_*C_{\alpha i}^\dagger C_{\alpha j}),
\]
because a nonzero linear combination of these independent maps has
strictly positive squared Hilbert--Schmidt norm.  Let $g>0$ be the smallest
eigenvalue among all the $G_\alpha$.  If a sequence of sums of squares
$A_n=\cM^*(X_n)$ converges, with $X_n=\bigoplus_\alpha X_{n,\alpha}\succeq0$,
then
\[
 \Tr(\rho_* A_n)=\sum_\alpha\Tr(G_\alpha X_{n,\alpha})
 \ge g\sum_\alpha\Tr X_{n,\alpha}.
\]
The left-hand side is bounded, so the positive semidefinite coefficient
matrices $X_n$ are bounded.  A subsequence converges to $X\succeq0$, and
$A_n\to\cM^*(X)$ along it.  The limit is still a sum of squares, proving
that $\SigmaN$ is closed.

\begin{proof}[Proof of \cref{prop:projection} by separation]
The forward inclusion follows by taking expectations of sums of squares.
For the reverse inclusion, let normalized $D_2$ satisfy every $(2,q)$
condition and suppose $D_2\notin T$.  Since $T$ is compact and convex,
finite-dimensional separation gives a two-body operator $H$ and a real
number $c$ such that
\[
 D_2(H)<c\le D'_2(H)\qquad(D'_2\in T).
\]
Put $B_2=H-c\id$.  Normalization then gives $D_2(B_2)<0$, whereas
$y(\pNadj B_2)\ge0$ for every normalized full $q$-positive assignment.
By the normalization argument above, this inequality holds for every
$y\in\cQ_q^{[N]}$.

It remains to show that $\pNadj B_2$ is a sum of squares.  If it were
outside the closed convex cone $\SigmaN$, separation from that cone would
give a linear expectation assignment $y$ with
\[
 y(A)\ge0\quad(A\in\SigmaN),\qquad y(\pNadj B_2)<0.
\]
The first condition is exactly positivity on all $q$-level squares, namely
$y\in\cQ_q^{[N]}$, contradicting the preceding paragraph.  Hence
$\pNadj B_2\in\SigmaN$, or $B_2\in\coneN$.  But the assumed $(2,q)$
conditions require $D_2(B_2)\ge0$, the final contradiction.  This proves
\cref{eq:projection}.  Minimizing the same two-body objective over the
equal sets gives \cref{eq:energy-equality}.
\end{proof}

For comparison with cone duality, put $S=\pN(\cQ_q^{[N]})$.  The two
separation steps express the familiar chain
\begin{equation}
 \cP_{2,q}^{\mathrm{polar},[N]}
 \overset{\text{(a)}}{=}\coneN^*
 \overset{\text{(b)}}{=}(S^*)^*
 \overset{\text{(c)}}{=}\overline S
 \overset{\text{(d)}}{=}S .
 \label{eq:bipolar-chain}
\end{equation}
Here (a) is the definition, (b) follows from closedness of $\SigmaN$,
and (c) is the finite-dimensional bipolar theorem.  For (d), the
normalization argument gives $S=\{tD:t\ge0,\ D\in T\}$.  Since $T$ is
compact and $D(\id)=1$ on $T$, any convergent sequence $t_nD_n$ has bounded
$t_n=(t_nD_n)(\id)$ and a convergent subsequence of $D_n$; its limit is
again in $S$.  This verifies the two closedness properties separately.

An example of a duality gap that arises when strict feasibility fails is
given in the Supplementary Material.

\section{Positivity of the pseudo-state at arbitrary order}
\label{app:general}

Section~\ref{sec:q3-check} checked the case $q=3$ by direct inspection
of the blocks.  This appendix proves the general case in four steps: the
pseudo-state of a single shell and its occupation-pattern probabilities;
the known positivity of its occupation-correlation matrices; the phase
average and the resulting sub-block structure of all particle--hole blocks;
and positivity at arbitrary order.

\subsection{A single shell}

Equation \eqref{eq:pseudo-values} factorizes over the two shells, and each
factor is the same object for a single shell.  Consider a shell of $M$
orbitals containing exactly $k$ particles, in the state that mixes all
$\binom Mk$ determinants with equal weight.  The diagonal elements of its
reduced density matrices are occupation correlations,
\[
 \langle n_S\rangle=\frac{\fall{k}{|S|}}{\fall{M}{|S|}},
 \qquad n_S=\prod_{i\in S}n_i ,
\]
the probability that all orbitals in $S$ are occupied, where
$\fall{z}{s}=z(z-1)\cdots(z-s+1)$ and $\fall{z}{0}=1$.  Replacing $k$ by a
real number $\theta$ defines the shell pseudo-expectation
\begin{equation}
 \langle n_S\rangle_{\theta,M}
 =
 \frac{\fall{\theta}{|S|}}{\fall{M}{|S|}}
 =
 \frac{\binom{\theta}{|S|}}{\binom{M}{|S|}},
 \label{eq:grig-functional}
\end{equation}
extended linearly to all polynomials in the occupation numbers, with
$n_i^2=n_i$.  For a shell of two orbitals,
\begin{gather*}
 \langle 1\rangle_{\theta,2}=1,\qquad
 \langle n_1\rangle_{\theta,2}=\langle n_2\rangle_{\theta,2}=\frac{\theta}{2},\\
 \langle n_1n_2\rangle_{\theta,2}=\frac{\theta(\theta-1)}{2}.
\end{gather*}
For $\theta=1$ this is one particle in two orbitals: each orbital is
occupied with probability $1/2$, and never both.  For $\theta=1/2$ the pair
occupation equals $-1/8$, a negative diagonal element of the 2-RDM that no
state can have.

\begin{lemma}[Occupation-pattern probabilities]
\label{lem:conditioning}
Let $U$, $V$, and $T$ be disjoint sets of orbitals in a shell of $M$
orbitals, $u=|U|$, $v=|V|$.  Then
\begin{align}
 &\Bigl\langle \prod_{i\in U}n_i
 \prod_{j\in V}(1-n_j)\,n_T\Bigr\rangle_{\theta,M}\nonumber\\
 &\quad=
 \frac{\fall{\theta}{u}\fall{M-\theta}{v}}
      {\fall{M}{u+v}}\,
 \langle n_T\rangle_{\theta-u,M-u-v}.
 \label{eq:conditioning}
\end{align}
\end{lemma}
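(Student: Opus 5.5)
Both sides are polynomials in $\theta$; I will check the identity at integer values and then extend it. The left-hand side is linear in the functional $\langle\cdot\rangle_{\theta,M}$. Expanding $\prod_{j\in V}(1-n_j)=\sum_{W\subseteq V}(-1)^{|W|}n_W$ and using \cref{eq:grig-functional}, it becomes
\[
 \sum_{w=0}^{v}(-1)^w\binom{v}{w}
 \frac{\fall{\theta}{u+w+t}}{\fall{M}{u+w+t}},
 \qquad t=|T|.
\]
This is a polynomial in $\theta$ of degree at most $u+v+t$. On the right, $\langle n_T\rangle_{\theta-u,M-u-v}=\fall{\theta-u}{t}/\fall{M-u-v}{t}$. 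The right-hand side is therefore also a polynomial in $\theta$ of degree at most $u+v+t$. I assume $u+v+t\le M$, so that all denominators are nonzero.

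The key step is to verify the identity at the integers $\theta=k\in\{0,1,\ldots,M\}$. There $\langle\cdot\rangle_{k,M}$ is the genuine expectation in the uniform mixture of determinants with $k$ particles in $M$ orbitals. The left-hand side is then the probability of the event that $U\cup T$ is occupied and $V$ is empty. I will factor this probability by conditioning.

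First, the probability that $U$ is fully occupied and $V$ is empty is
\[
 \binom{M-u-v}{k-u}\Big/\binom{M}{k}=\frac{\fall{k}{u}\fall{M-k}{v}}{\fall{M}{u+v}}.
\]
This count is elementary.

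Second, conditioned on that event, the remaining $k-u$ particles are uniformly distributed over the $M-u-v$ remaining orbitals. The conditional probability that $T$ is occupied is then $\langle n_T\rangle_{k-u,M-u-v}$. Multiplying the two factors gives the right-hand side at $\theta=k$.

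The same formulas remain valid when some factor vanishes, for example when $k<u$ or $M-k<v$. In those cases both sides are $0$ because a falling factorial vanishes, so no separate treatment is needed.

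Two polynomials of degree at most $u+v+t\le M$ that agree at the $M+1$ points $0,\ldots,M$ are identical. The identity therefore holds for all real $\theta$.

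The main obstacle is not the combinatorics but the degree bookkeeping, namely making sure $u+v+t\le M$ in every application. If an application exceeds this range, the interpolation argument fails. In that case I would instead give a direct algebraic proof. One first shows the pure case
\[
 \langle n_Un_T\rangle_{\theta,M}=\frac{\fall{\theta}{u}}{\fall{M}{u}}\,\langle n_T\rangle_{\theta-u,M-u},
\]
which is immediate from $\fall{\theta}{u+t}=\fall{\theta}{u}\fall{\theta-u}{t}$. One then removes the hole factors one at a time by induction on $v$, using
\[
 (1-n_j)=1-n_j
\]
together with the Pascal-type identity
\[
 \frac{\fall{a}{s}}{\fall{b}{s}}-\frac{\fall{a}{s+1}}{\fall{b}{s+1}}
 =\frac{\fall{a}{s}\,(b-a)}{\fall{b}{s+1}},
\]
which is valid for real $a$.
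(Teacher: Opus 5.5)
Your proof is correct and follows essentially the paper's route: evaluate at integer $\theta$ as a genuine probability in the uniform mixture, then extend to all real $\theta$ by polynomial interpolation at $\theta=0,\ldots,M$. The only difference is bookkeeping. The paper counts the left side directly to the closed form $\fall{\theta}{u+t}\fall{M-\theta}{v}/\fall{M}{u+v+t}$, extends that, and matches the right side algebraically by concatenating falling factorials; you instead match both sides at the integers by conditioning.

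Your concern about the degree bound is unnecessary. Since $U$, $V$, $T$ are disjoint subsets of an $M$-orbital shell, $u+v+t\le M$ holds automatically, so the fallback induction on $v$ is never needed.
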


For integer $\theta$ this is the familiar statement about conditional
probabilities.  The prefactor is the probability that the $u$ orbitals in
$U$ are occupied and the $v$ orbitals in $V$ are empty.  Given this, the
remaining $\theta-u$ electrons are spread uniformly over the remaining
$M-u-v$ orbitals, which is the ensemble $\langle\cdot\rangle_{\theta-u,M-u-v}$
of a smaller shell.  For example, take two electrons in four spin
orbitals ($\theta=2$, $M=4$; six determinants), $U=\{1\}$, $V=\{2\}$, and
$T=\{3\}$.  Only $|13\rangle$ has orbitals $1$ and $3$ occupied and
orbital $2$ empty, so the left-hand side is $1/6$.  The prefactor is
$\fall{2}{1}\fall{2}{1}/\fall{4}{2}=4/12=1/3$; the remaining electron then
occupies orbital $3$ or $4$ with equal probability, giving
$\langle n_3\rangle_{1,2}=1/2$, and indeed $\frac13\cdot\frac12=\frac16$.
The content of the lemma is that the same rule continues to hold for a
noninteger pseudo particle number, for instance $\theta=5/2$.

\begin{proof}
Put $s=|T|$.

\emph{Integer $\theta$.}  For $\theta\in\{0,\ldots,M\}$ the left-hand side
is the probability, in the equal mixture of all $\binom M\theta$
determinants with $\theta$ electrons in the shell, that the $u+s$ orbitals
in $U\cup T$ are occupied and the $v$ orbitals in $V$ are empty.  The
qualifying determinants place the remaining $\theta-u-s$ electrons in the
$M-u-v-s$ free orbitals, so
\[
 \frac{\binom{M-u-v-s}{\theta-u-s}}{\binom{M}{\theta}}
 =
 \frac{\fall{\theta}{u+s}\fall{M-\theta}{v}}
      {\fall{M}{u+v+s}}.
\]

\emph{Noninteger $\theta$.}  There is no ensemble to count, but both sides
are polynomials in $\theta$.  On the left, expanding each hole factor
$1-n_j$ writes the expectation as a signed sum of values
$\langle n_W\rangle_{\theta,M}=\fall{\theta}{|W|}/\fall{M}{|W|}$ with
$|W|\le u+v+s\le M$, each a polynomial in $\theta$ of degree at most $M$;
the closed form above is also such a polynomial.  A nonzero polynomial of
degree at most $M$ has at most $M$ roots, so two such polynomials that agree
at the $M+1$ points $\theta=0,\ldots,M$ are identical.  Hence the closed
form holds for every real $\theta$.

\emph{Right-hand side.}  The right-hand side is
\[
 \frac{\fall{\theta}{u}\fall{M-\theta}{v}}{\fall{M}{u+v}}
 \cdot\frac{\fall{\theta-u}{s}}{\fall{M-u-v}{s}},
\]
and since a falling factorial continued from where it stopped is a longer
falling factorial, $\fall{\theta}{u}\fall{\theta-u}{s}=\fall{\theta}{u+s}$
and $\fall{M}{u+v}\fall{M-u-v}{s}=\fall{M}{u+v+s}$, this equals the same
closed form.
\end{proof}

For the two orbital groups define an auxiliary product functional on
occupation numbers in the full Fock space, for every set $S$ of orbitals,
\begin{equation}
 \langle n_S\rangle_{\lambda,\mathrm{prod}}
 =
 \frac{\fall{\lambda}{|S\cap A|}}
      {\fall{m}{|S\cap A|}}
 \frac{\fall{m-\lambda}{|S\cap B|}}
      {\fall{m}{|S\cap B|}}.
 \label{eq:pseudo}
\end{equation}
Thus
\[
 \langle\cdot\rangle_{\lambda,\mathrm{prod}}
 =
 \langle\cdot\rangle_{\lambda,m}\otimes\langle\cdot\rangle_{m-\lambda,m}
\]
on the commuting occupation numbers, and \cref{lem:conditioning} with
$T=\varnothing$, applied in each shell, gives the occupation-pattern
probabilities \cref{eq:pseudo-values}.  The two pseudo particle numbers add
to $m$, so the fixed-$N$ identities
$\langle(\widehat N-N)n_S\rangle_{\lambda,\mathrm{prod}}=0$ hold whenever $S$ contains fewer
than $m$ orbitals of each shell.  They do not hold at all degrees: for
$n_A=\prod_{i\in A}n_i$, the value is
$(m-\lambda)\fall{\lambda}{m}/m!$, generally nonzero.  The fixed-$N$
functional $\langle\cdot\rangle_\lambda=\widehat{\mathcal L}_\lambda$ is instead
\cref{eq:fixedN-interpolation}; it agrees with this auxiliary formula
through total occupation degree $m$, covering all moments needed for
$\mathcal B_q^{[N]}$ on the central interval.

\subsection{Occupation-correlation matrices}

The level-$q$ constraints do not test individual entries, however, but
matrices of occupation correlations.  The simplest is the number-operator part
of $G_2$ in \cref{eq:pseudo-blocks}: on the operators $1,n_1,\ldots,n_M$ it
is the $(M+1)\times(M+1)$ matrix of $\langle n_in_k\rangle_{\theta,M}$.
More generally, rows labelled by products $n_S$ of up to $d$ occupation
numbers give the occupation-correlation matrix of order $d$, with entries
$\langle n_Sn_T\rangle_{\theta,M}=\langle n_{S\cup T}\rangle_{\theta,M}$, which
are diagonal elements of reduced density matrices of order up to $2d$.  The
question is when these matrices are positive semidefinite even though some
entries, like the pair occupation above, are negative.

\begin{lemma}[Slice positivity]
\label{lem:slice-positive}
Consider a shell of $M$ orbitals with pseudo particle number $\theta$, and
let $d$ be a nonnegative integer with $2d\le M$.  If
\begin{equation}
 d\le\theta\le M-d,
 \label{eq:slice-central}
\end{equation}
then
\[
 \langle p^\dagger p\rangle_{\theta,M}\ge0
\]
for every operator $p=\sum_{|S|\le d}c_Sn_S$ with complex coefficients; that
is, the occupation-correlation matrix of order $d$ is positive
semidefinite.  Moreover, the shell particle number
$\widehat N_M=\sum_{i=1}^Mn_i$ has zero variance in the sense that
\[
 \bigl\langle(\widehat N_M-\theta)\,p\bigr\rangle_{\theta,M}=0
\]
for every polynomial $p$ in the occupation numbers of degree at most $M-1$,
in particular of degree at most $2d-1$.
\end{lemma}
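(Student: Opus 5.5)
The plan is to prove the two assertions in the reverse order, because the zero-variance identity is what reduces the positivity question to a one-parameter check. For the zero-variance statement it suffices, by linearity, to take $p=n_S$ with $s=|S|\le M-1$. On occupation numbers $n_i^2=n_i$, so
\[
 \widehat N_M n_S=s\,n_S+\sum_{i\notin S}n_{S\cup\{i\}} .
\]
Applying \cref{eq:grig-functional} gives
\[
 s\frac{\fall{\theta}{s}}{\fall{M}{s}}
 +(M-s)\frac{\fall{\theta}{s+1}}{\fall{M}{s+1}}
 =\frac{\fall{\theta}{s}}{\fall{M}{s}}\bigl[s+(\theta-s)\bigr]
 =\theta\langle n_S\rangle_{\theta,M}.
\]
The restriction $s\le M-1$ is needed only so that $\fall{M}{s+1}\neq0$. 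Iterating the identity shows that $\langle g(\widehat N_M)\,h\rangle_{\theta,M}=g(\theta)\langle h\rangle_{\theta,M}$ whenever $\deg g+\deg h\le M$.

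For positivity, I would use the permutation symmetry of the shell. The functional is $S_M$-invariant, so the quadratic form $p\mapsto\langle p^\dagger p\rangle_{\theta,M}$ on the space $\mathcal P_d$ of multilinear polynomials of degree at most $d$ is block diagonal with respect to the isotypic decomposition. That decomposition is the classical Johnson-scheme one: every $p\in\mathcal P_d$ can be written as
\[
 p=\sum_{j\le d}g_j(\widehat N_M)\,h_j ,
\]
where $h_j$ is a \emph{harmonic} polynomial of degree $j$ (annihilated by $\sum_i\partial/\partial n_i$, and spanning the $(M-j,j)$ irreducible representation) and $\deg g_j\le d-j$. This decomposition exists once $2d\le M$, which is exactly the hypothesis of the lemma.

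Since $p^\dagger p$ has degree at most $2d\le M$, the zero-variance identity lets us replace every $\widehat N_M$ by $\theta$. Cross terms between different $j$ vanish by Schur orthogonality, because the invariant functional pairs distinct irreducibles to zero. Hence
\[
 \langle p^\dagger p\rangle_{\theta,M}=\sum_{j\le d}|g_j(\theta)|^2\,\langle h_j^\dagger h_j\rangle_{\theta,M}.
\]
By Schur's lemma, on each irreducible the form is a scalar multiple of a fixed positive invariant inner product, say $\langle h^\dagger h\rangle_{\theta,M}=c_j(\theta)\|h\|^2$. It therefore remains to show $c_j(\theta)\ge0$ for $j\le d\le\theta\le M-d$.

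The main step is to determine $c_j$. Each entry of the form is a polynomial in $\theta$, and since $\deg(h^\dagger h)=2j$, $c_j$ is a polynomial of degree at most $2j$. At integer $\theta=k\in\{0,\ldots,M\}$ the functional is the uniform distribution on the weight-$k$ slice of the Boolean cube. A degree-$j$ harmonic polynomial vanishes identically on slices with $k<j$ or $k>M-j$, because the slice module $\C^{\binom Mk}$ contains no $(M-j,j)$ component there. On the slices with $j\le k\le M-j$, $c_j(k)>0$: the restriction of $h$ to such a slice is nonzero, which is the standard injectivity of harmonic restriction.

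This exhibits $2j$ roots, $0,\ldots,j-1$ and $M-j+1,\ldots,M$, so
\[
 c_j(\theta)=\kappa_j\,\fall{\theta}{j}\,\fall{M-\theta}{j},
\]
and $\kappa_j>0$ follows from positivity at $k=j$. For $d\le\theta\le M-d$ and $j\le d$ every factor is nonnegative, which gives the lemma.

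I expect the main obstacle to be justifying the representation-theoretic facts cleanly: the decomposition $p=\sum_j g_j(\widehat N_M)h_j$ for $2d\le M$, and the nonvanishing of the harmonic restriction to middle slices. If these prove awkward, the fallback is to compute $\langle h^\dagger h\rangle_{\theta,M}$ directly for the explicit spanning harmonics $\prod_{t=1}^{j}(n_{a_t}-n_{b_t})$ using \cref{lem:conditioning}. That calculation should also produce the factor $\fall{\theta}{j}\fall{M-\theta}{j}$, since each factor $n_a-n_b$ in such a product requires one occupied and one empty orbital.
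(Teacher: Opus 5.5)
Your proof is correct, and it takes a different route from the paper's. The zero-variance half is exactly the paper's computation \eqref{eq:slice-ideal-check}; the positivity half reaches the same norm formula differently.

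\textbf{The paper's route.} It fixes Filmus's explicit top-set basis $\chi_B$ and asserts that Filmus's orthogonality calculation for exchangeable measures carries over to the signed functional. It computes each norm from the pattern expansion of $\chi_j^2$ via \cref{lem:conditioning}. Then, for noninteger $\theta$ only, it shows that these classes span $\cP_{\le d}$ modulo the radical $(S-\theta)\cP_{\le d-1}$. That dimension count rests on injectivity of multiplication by $S-\theta$ on the Boolean cube. Integer $\theta$ is handled separately by the genuine slice measure.

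\textbf{Your route.} You decompose $\cP_{\le d}$ into $\widehat N_M$-multiples of harmonics and collapse $\widehat N_M\to\theta$ with the zero-variance identity. Your degree bookkeeping, $2d\le M$, is exactly what that step needs. You obtain orthogonality directly from $S_M$-invariance and Schur's lemma. You identify $c_j(\theta)$ without computing it: it is a polynomial of degree at most $2j$ with $2j$ forced integer roots, normalized by one positive value.

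\textbf{Comparison.} Your argument treats all real $\theta$ uniformly. It replaces both the appeal to Filmus's orthogonality and the quotient-dimension argument with standard Johnson-scheme facts. The paper's route, in exchange, writes the norms explicitly in a concrete basis and stays close to \cref{lem:conditioning}.

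\textbf{Two small points to tidy.}

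First, the $(M-j,j)$ component of $\cP_{\le d}$ has multiplicity $d-j+1$. A general $p$ is therefore $\sum_j\sum_{t\le d-j}\widehat N_M^{\,t}h_{j,t}$ with independent harmonics $h_{j,t}$, not $g_j(\widehat N_M)h_j$ with a single $h_j$. After the substitution these merge into one harmonic $\tilde h_j=\sum_t\theta^t h_{j,t}$, and your formula becomes $\sum_j c_j(\theta)\|\tilde h_j\|^2$. Nothing else changes.

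Second, you do not need an injectivity theorem to get $c_j(k)>0$ for $j\le k\le M-j$. Since $c_j(k)$ is independent of $h$, it suffices to exhibit one harmonic that is nonzero somewhere on the slice. The harmonic $\prod_{t\le j}(n_{2t-1}-n_{2t})$ equals $1$ on any $k$-set containing $1,3,\ldots,2j-1$ and avoiding $2,4,\ldots,2j$. Such a set exists exactly when $j\le k\le M-j$.
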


For example, with $n_i=a_i^\dagger a_i$, the case $d=1$ concerns
\begin{gather*}
 p=c_0+\sum_ic_i\,a_i^\dagger a_i ,\\
 \langle p^\dagger p\rangle_{\theta,M}
 =\sum_{i,k=0}^{M}\overline{c_i}c_k\,
 \langle a_i^\dagger a_i\,a_k^\dagger a_k\rangle_{\theta,M},
\end{gather*}
with the convention $a_0^\dagger a_0=1$; the matrix of coefficients is the
number-operator part of $G_2$.  The case $d=2$ adds the pair operators
$n_in_j=a_i^\dagger a_j^\dagger a_ja_i$,
\[
 p=c_0+\sum_ic_i\,a_i^\dagger a_i
 +\sum_{i<j}c_{ij}\,a_i^\dagger a_j^\dagger a_ja_i ,
\]
and $\langle p^\dagger p\rangle_{\theta,M}$ then involves diagonal elements
of the 4-RDM, which first enter at $q=4$.  The case $d=3$ adds the triple
operators $n_in_jn_k=a_i^\dagger a_j^\dagger a_k^\dagger a_ka_ja_i$,
\[
 p=c_0+\sum_ic_i\,n_i+\sum_{i<j}c_{ij}\,n_in_j
 +\sum_{i<j<k}c_{ijk}\,a_i^\dagger a_j^\dagger a_k^\dagger a_ka_ja_i ,
\]
so that $\langle p^\dagger p\rangle_{\theta,M}$ involves diagonal elements of
the 6-RDM; these first enter at $q=6$, and \cref{lem:slice-positive} then
requires $M\ge6$ and $3\le\theta\le M-3$.  The zero-variance statement with
$p=1$ gives $\langle\widehat N_M\rangle_{\theta,M}=\theta$; with $p=n_i$ it
gives
\[
 \sum_{j\ne i}\langle n_in_j\rangle_{\theta,M}
 =(\theta-1)\langle n_i\rangle_{\theta,M},
\]
the usual contraction of the 2-RDM to the 1-RDM, with $N-1$ replaced by
$\theta-1$.

In words: if the pseudo particle number lies at least $d$ away from both
the empty and the filled shell, the occupation-correlation matrices through
order $d$ are positive semidefinite, so the corresponding constraints cannot
distinguish the pseudo-state from a genuine state, and the shell particle
number is $\theta$ with zero variance.  For integer $\theta$ this is just
positivity of a genuine state; the content is the noninteger case.  The
pseudo-state \cref{eq:grig-functional} is the standard one used for
sum-of-squares lower bounds on $0/1$ problems with a fixed number of ones
(the knapsack problem) \cite{Grigoriev2001,Lee2016,Potechin2019}, and
neither it nor \cref{lem:slice-positive} is new; for completeness,
Appendix~\ref{app:slice} gives a self-contained proof from the
Young--Fourier basis of Filmus \cite[Theorems 9, 10, and 15]{Filmus2016}.
Below we lift it to all particle--hole blocks of the fermionic problem;
level $q$ will require the order $d=q-1$.

\begin{remark}[Two orbitals]
\label{rem:slice-two}
Take $M=2$ and $d=1$.  The shell has two orbitals with occupation-number
operators $n_1=a_1^\dagger a_1$ and $n_2=a_2^\dagger a_2$, and the test
operators are
\[
 p=c_0+c_1n_1+c_2n_2=c_0+c_1a_1^\dagger a_1+c_2a_2^\dagger a_2 .
\]
Expanding $p^\dagger p$ with $n_i^2=n_i$ gives
$\langle p^\dagger p\rangle_{\theta,2}=v^\dagger Q v$ with
$v=(c_0,c_1,c_2)^{\mathsf T}$, where
\[
 Q=
 \begin{pmatrix}
 1 & \theta/2 & \theta/2\\
 \theta/2 & \theta/2 & \theta(\theta-1)/2\\
 \theta/2 & \theta(\theta-1)/2 & \theta/2
 \end{pmatrix}
\]
is the Gram matrix of the operators $1,n_1,n_2$, that is, the number-operator
part of $G_2$ for this shell.  Its eigenvalues are $0$,
$\theta(2-\theta)/2$, and $1+\theta^2/2$.  Thus $Q\succeq0$ for all
$0\le\theta\le2$, so $G_2$ does not detect the negative pair occupation.
The null eigenvector is $(-\theta,1,1)^{\mathsf T}$, that is,
$p=n_1+n_2-\theta=a_1^\dagger a_1+a_2^\dagger a_2-\theta$: the zero eigenvalue is precisely the zero-variance
statement of \cref{lem:slice-positive}.
The positive range here, $0\le\theta\le2$, is wider than the range
$\theta=1$ guaranteed by \cref{eq:slice-central}.  In general, for $d\ge1$, the norm factors
of Appendix~\ref{app:slice} show that a noninteger $\theta$ is admissible
exactly when $d-1<\theta<M-d+1$ (for $d=0$ only constants are squared, and
no restriction on $\theta$ arises); the weaker condition
\eqref{eq:slice-central} is all that the proof of \cref{thm:main} requires,
because the exclusion of the remaining values is established separately
in Sec.~\ref{sec:necessity}.
\end{remark}

\subsection{Phase average and sub-block structure}

The product evaluation formula \eqref{eq:pseudo} assigns values to
occupation products; through degree $m$ these equal the fixed-$N$
pseudo-expectations.  The positivity blocks, however, also contain entries such as
$\langle a_1^\dagger a_2\rangle$ in $G_2$, $\langle a_1^\dagger
a_2^\dagger a_4a_3\rangle$ in $D_2$, or $\langle a_1^\dagger a_2a_2^\dagger
a_1\rangle$, which is diagonal but not written as a product of occupation
numbers.  Since every $\rho_k$ in \cref{eq:fixedN-interpolation} is diagonal,
only the occupation-basis diagonal part contributes.  For retained block
entries its degree is at most $q\le m$, so it can be evaluated with
\cref{eq:pseudo}.  Rotating the phase of each orbital multiplies a creation
operator by a phase, $e^{i\phi_jn_j}a_j^\dagger e^{-i\phi_jn_j}
=e^{i\phi_j}a_j^\dagger$, so averaging over all phases removes every
operator that changes the occupation of some orbital and keeps the rest.
Formally, let $\mathbb E_{\mathrm{diag}}$ be this phase average,
\begin{equation}
 \mathbb E_{\mathrm{diag}}(O)
 =
 \int_{[0,2\pi]^{2m}}
 e^{i\sum_j\phi_jn_j}Oe^{-i\sum_j\phi_jn_j}
 \frac{d\phi}{(2\pi)^{2m}},
 \label{eq:twirl}
\end{equation}
and the fixed-$N$ interpolation obeys
\[
 \langle O\rangle_\lambda
 =
 \langle \mathbb E_{\mathrm{diag}}O\rangle_\lambda.
\]
For example,
\begin{align*}
 \mathbb E_{\mathrm{diag}}(a_1^\dagger a_2)&=0,\\
 \mathbb E_{\mathrm{diag}}(a_1^\dagger a_2^\dagger a_4a_3)&=0,\\
 \mathbb E_{\mathrm{diag}}(a_1^\dagger a_1)&=n_1,\\
 \mathbb E_{\mathrm{diag}}(a_1^\dagger a_2a_2^\dagger a_1)&=n_1(1-n_2),
\end{align*}
so $\langle a_1^\dagger a_2\rangle_\lambda
=\langle a_1^\dagger a_2^\dagger a_4a_3\rangle_\lambda=0$,
$\langle a_1^\dagger a_1\rangle_\lambda=\langle n_1\rangle_\lambda$, and
$\langle a_1^\dagger a_2a_2^\dagger a_1\rangle_\lambda
=\langle n_1(1-n_2)\rangle_\lambda$.  The last example shows that a
diagonal entry becomes a joint electron--hole probability, whose value is
given by \cref{eq:pseudo-values}.  Thus \cref{eq:twirl} sets all
off-diagonal occupation-basis matrix elements to zero, justifies the block
formulas \cref{eq:pseudo-blocks,eq:pseudo-blocks3,eq:pseudo-blocks3-repeated},
and leaves $\langle(K-\lambda)^2\rangle_\lambda$ unchanged because $K$ is
diagonal.

What must be shown is that every positivity block of these pseudo reduced
density matrices is positive semidefinite.  Each block is the matrix
$\langle C^\dagger C'\rangle$ for a set of row operators $C,C'$, products of
creation and annihilation operators; for instance, $D_2$, $Q_2$, and $G_2$
use the rows $a_ia_j$, $a_i^\dagger a_j^\dagger$, and $a_i^\dagger a_j$.  For a
genuine state these blocks are already separated by the change in total
particle number.  For the diagonal pseudo-state they split further, because
two rows give a nonzero entry only if they change the occupation of every
orbital in the same way.  The following lemma
records the resulting sub-blocks.

\begin{lemma}[Sub-block structure]
\label{lem:car-sector}
Let $\langle\cdot\rangle$ be diagonal, $\langle O\rangle
=\langle\mathbb E_{\mathrm{diag}}O\rangle$.  For disjoint sets $U$ and $V$
of orbitals, the row operators of length at most $q$ that remove one
electron from each orbital of $U$ and add one electron to each orbital of $V$
span the same space as
\begin{equation}
 w_{U,V,P}=a_V^\dagger a_U\,n_P,
 \qquad P\cap(U\cup V)=\varnothing,
 \label{eq:sector-normal}
\end{equation}
where $a_U$ and $a_V^\dagger$ are ordered products over $U$ and $V$,
$n_P=\prod_{i\in P}n_i$, and
\begin{equation}
 |U|+|V|+2|P|\le q.
 \label{eq:degree-budget}
\end{equation}
Entries between rows with different $(U,V)$ vanish, and within the
$(U,V)$ sub-block
\begin{equation}
 \langle w_{U,V,P}^\dagger w_{U,V,P'}\rangle
 =
 \Bigl\langle \prod_{i\in U}n_i\prod_{j\in V}(1-n_j)\,
 n_{P\cup P'}\Bigr\rangle .
 \label{eq:sector-square}
\end{equation}
Hence $\langle\cdot\rangle$ is full positive through order $q$ if and only
if every $(U,V)$ sub-block, the matrix \cref{eq:sector-square} with rows
and columns labelled by the admissible sets $P$, is positive semidefinite.
\end{lemma}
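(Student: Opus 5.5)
The proof reduces an arbitrary row of length at most $q$ to normal form, groups rows by the occupation change they produce, and then evaluates the Gram entries using the phase average. \textbf{Step 1 (normal form).} Take a word $C$ of length at most $q$. Using the anticommutation relations, move all operators acting on the same orbital next to each other, at the cost of a sign. On a single orbital $i$, every nonzero product of $a_i$ and $a_i^\dagger$ reduces to one of $a_i$, $a_i^\dagger$, $n_i$, $1-n_i$, or $1$. Reducing never increases the number of factors. The reduced form $a_i$ or $a_i^\dagger$ requires at least one factor, while $n_i$ or $1-n_i$ requires at least two factors. Define $U$ as the set of orbitals where the net effect is $a_i$, and $V$ as the set where it is $a_i^\dagger$. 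Then $C=\pm a_V^\dagger a_U\prod_{i\in P_1}n_i\prod_{j\in P_2}(1-n_j)$ with $|U|+|V|+2|P_1|+2|P_2|\le q$. Expanding each $1-n_j$ produces terms $a_V^\dagger a_U n_P$ with $P\subseteq P_1\cup P_2$, and all of these still satisfy \cref{eq:degree-budget}. Conversely, each $w_{U,V,P}$ is a word of length $|U|+|V|+2|P|$. So for fixed $(U,V)$ the two spans coincide. The occupation numbers $n_P$ commute with $a_V^\dagger a_U$ because $P$ is disjoint from $U\cup V$, so placing $n_P$ on the right loses no generality.

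\textbf{Step 2 (orthogonality between sectors).} Conjugating $w_{U,V,P}$ by the phase rotation in \cref{eq:twirl} multiplies it by $\exp\bigl(i\sum_{j\in V}\phi_j-i\sum_{i\in U}\phi_i\bigr)$. Therefore $w_{U,V,P}^\dagger w_{U',V',P'}$ acquires the phase $\exp\bigl(i[\phi(V')-\phi(U')-\phi(V)+\phi(U)]\bigr)$. Averaging over all phases gives zero unless the two signed indicator vectors agree. Since $U\cap V=\varnothing$, this happens exactly when $(U,V)=(U',V')$. By the diagonality hypothesis $\langle O\rangle=\langle\mathbb E_{\mathrm{diag}}O\rangle$, all cross-sector entries therefore vanish.

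\textbf{Step 3 (evaluate within a sector).} Within a sector, $w^\dagger_{U,V,P}w_{U,V,P'}=n_P\,a_U^\dagger a_V a_V^\dagger a_U\,n_{P'}$. Because $U$ and $V$ are disjoint and each consists of distinct orbitals, the operators can be paired orbital by orbital. The sign incurred when reordering $a_U^\dagger$ past $a_V a_V^\dagger$ is $+1$, since $a_Va_V^\dagger$ is even. The remaining factors give $a_U^\dagger a_U=\pm\prod_{i\in U} n_i$ and $a_Va_V^\dagger=\pm\prod_{j\in V}(1-n_j)$, with the two signs reversing orderings. Checking that these two reordering signs are exactly compensated, so that the product is $+\prod_U n_i$ rather than $\pm$, is the only delicate bookkeeping. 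I would handle it by induction on $|U|$, using $a_U^\dagger=(a_U)^\dagger$ with reversed order. The factors $n_P$ and $n_{P'}$ commute with everything and satisfy $n_Pn_{P'}=n_{P\cup P'}$ because $n_i^2=n_i$. This yields \cref{eq:sector-square}.

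\textbf{Step 4 (conclusion).} A general $C\in\cW_{\le q}$ is a finite sum $\sum_{(U,V)}C_{U,V}$, where each $C_{U,V}=\sum_P c_P w_{U,V,P}$. By Step 2, $\langle C^\dagger C\rangle=\sum_{(U,V)}\langle C_{U,V}^\dagger C_{U,V}\rangle$. By Step 3, each summand is the quadratic form of the $(U,V)$ sub-block evaluated on the coefficient vector $(c_P)$. Hence positivity of every sub-block implies full positivity. Conversely, choosing $C$ supported in a single sector shows that full positivity forces each sub-block to be positive semidefinite. Identities that vanish on $\cH_N$ cause no difficulty, since the statement concerns spans of rows and the Gram form is evaluated on actual operators.
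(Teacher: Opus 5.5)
Your proposal is correct and follows the paper's proof essentially step for step: you normal-order each word into $\pm a_V^\dagger a_U$ times occupation and hole factors, with the length count giving the degree budget; cross-sector entries vanish under the phase average; you evaluate $(a_V^\dagger a_U)^\dagger(a_V^\dagger a_U)$ within a sector; and you reduce full positivity to the $(U,V)$ sub-blocks. The sign bookkeeping you flag in Step 3 is in fact trivial, because with the adjoint taken in reversed order $w_{U,V,P}^\dagger w_{U,V,P'}$ is a nested product: at each stage the innermost pair $a_ja_j^\dagger$ or $a_i^\dagger a_i$ is even and can be removed without a sign, so no compensation argument is needed.
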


\begin{proof}
Normal ordering a product pairs every orbital that appears once as a
creation and once as an annihilation operator into an occupation number, or into $1-n_i$ when
the annihilator stands first.  The unpaired annihilation and creation operators define
$U$ and $V$, and the paired orbitals give a polynomial in the remaining
occupation numbers, that is, a combination of the $n_P$; the length count
gives \cref{eq:degree-budget}.  Rows with different $(U,V)$ acquire
different phases under \cref{eq:twirl}, so their entries vanish.  Within a
sub-block, $(a_V^\dagger a_U)^\dagger(a_V^\dagger a_U)
=\prod_{i\in U}n_i\prod_{j\in V}(1-n_j)$, and the occupation numbers in
$P$ and $P'$ commute with it, which gives \cref{eq:sector-square}.  Since
every test operator is a combination of such rows, positivity of all blocks
is equivalent to positivity of all sub-blocks.
\end{proof}

For $q=2$ the lemma recovers the familiar blocks:
$(U,V)=(\{i,j\},\varnothing)$ with the single row $a_ja_i$ is a diagonal
entry of $D_2$, $(\varnothing,\{i,j\})$ one of $Q_2$, and
$(\{j\},\{i\})$ a diagonal entry of the single-excitation part of $G_2$; these
sub-blocks are $1\times1$.  The sub-block $(\varnothing,\varnothing)$ has
the rows $1,n_1,\ldots,n_r$ and is the number-operator part of $G_2$.  At
$q=3$, the sub-block $(\{s\},\varnothing)$ has the rows $a_s$ and $a_sn_p$
and is the repeated-index part of $E_3$ in
\cref{eq:pseudo-blocks3-repeated}.  These are exactly the ``weight'' and
``matrix'' entries of \cref{tab:block-construction}.

Every entry \cref{eq:sector-square} is a joint electron--hole probability:
an electron is required in every orbital of $U$ and of $P\cup P'$, and a
hole in every orbital of $V$, while the other orbitals are not prescribed.
For two electrons in four spin orbitals with $U=\{1\}$ and $V=\{2\}$, the
entry with $P\cup P'=\varnothing$ is the probability $2/6=1/3$ of
$|13\rangle$ or $|14\rangle$, and the entry with $P\cup P'=\{3\}$ is the
probability $1/6$ of $|13\rangle$ alone.  For a combination
$p=\sum_Pc_Pn_P$ of rows, the quadratic form of the sub-block is
$\langle\chi_{U,V}|p|^2\rangle$ with
$\chi_{U,V}=\prod_{i\in U}n_i\prod_{j\in V}(1-n_j)$.

\subsection{Positivity at arbitrary order}

\begin{lemma}[Positivity at arbitrary order]
\label{lem:general-positivity}
Let $2\le q\le m$ and $q-1\le\lambda\le m-q+1$.  Then
$\langle\cdot\rangle_\lambda$ is full positive through order $q$.
\end{lemma}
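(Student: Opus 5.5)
By \cref{lem:car-sector}, it suffices to show that every $(U,V)$ sub-block is positive semidefinite. The pseudo-state $\langle\cdot\rangle_\lambda$ is diagonal by \cref{eq:twirl}, so the lemma applies. For a row combination $p=\sum_P c_Pn_P$ the quadratic form of the $(U,V)$ sub-block is $\langle\chi_{U,V}|p|^2\rangle_\lambda$. Every moment entering it has occupation degree at most $|U|+|V|+2\max|P|\le q\le m$, so we may evaluate with the product formula \cref{eq:pseudo}. That formula factorizes as $\langle\cdot\rangle_{\lambda,m}\otimes\langle\cdot\rangle_{m-\lambda,m}$ over the shells, and we split into two cases according to the row budget.

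\emph{Case 1: $|U|+|V|\ge q-1$.} Then \cref{eq:degree-budget} forces $P=\varnothing$, and the sub-block is the $1\times1$ pattern probability \cref{eq:pseudo-values}. Write $u_A=|U\cap A|$, $v_A=|V\cap A|$, and similarly for $B$. The probability is the product of
\[
 \frac{\fall{\lambda}{u_A}\fall{m-\lambda}{v_A}}{\fall{m}{u_A+v_A}}
 \quad\text{and}\quad
 \frac{\fall{m-\lambda}{u_B}\fall{\lambda}{v_B}}{\fall{m}{u_B+v_B}} .
\]
Every falling factorial here has order at most $q$. Since $q-1\le\lambda$ and $q-1\le m-\lambda$, each factor $\lambda-t$ or $m-\lambda-t$ with $t\le q-1$ is nonnegative. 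The denominators are positive because the orders are at most $q\le m$. Hence the probability is nonnegative.

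\emph{Case 2: $|U|+|V|\le q-2$.} Now the rows $n_P$ satisfy $|P|\le d:=\lfloor (q-|U|-|V|)/2\rfloor\le q-1$. We apply \cref{lem:conditioning} in each shell with $T$ the part of $P\cup P'$ in that shell. The sub-block then equals the positive prefactor $\langle\chi_{U,V}\rangle_\lambda$ times the occupation-correlation matrix of the product functional
\[
 \langle\cdot\rangle_{\lambda-u_A,\,m-u_A-v_A}\otimes\langle\cdot\rangle_{m-\lambda-u_B,\,m-u_B-v_B}.
\]
This product functional lives on the orbitals outside $U\cup V$. The prefactor is nonnegative by Case 1's argument, since its orders are at most $q-2$.

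It remains to show that this conditioned matrix is positive semidefinite for polynomials of degree at most $d$. The matrix is the tensor product of the two shell Gram forms. A tensor product of PSD Gram matrices is PSD, and the degree-$d$ polynomial space of the union is spanned by products of degree-$\le d$ polynomials in each shell. So it suffices that each shell's degree-$d$ matrix is PSD. For shell $A$ this is \cref{lem:slice-positive} with $\theta=\lambda-u_A$ and $M=m-u_A-v_A$. Its hypothesis is $d\le\theta\le M-d$, which reads
\[
 d+u_A\le\lambda\le m-v_A-d .
\]
Since $2d+|U|+|V|\le q$ and $d\ge0$, we have $d+u_A\le q-1$ whenever $d+|U|+|V|\le q-1$. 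That inequality holds when $d\ge1$. When $d=0$ only constants occur, and positivity reduces to nonnegativity of the prefactor. The analogous estimate applies to the upper bound and to shell $B$ with $m-\lambda$. The requirement $2d\le M$ follows from $2d+u_A+v_A\le q\le m$.

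\textbf{Main obstacle.} The delicate point is the bookkeeping that the conditioned parameters stay in the slice range of \cref{lem:slice-positive}; with $d\ge1$ and $|U|+|V|\le q-2$ this needs $d+|U|+|V|\le q-1$, which I expect to verify by the inequality above. A second subtlety is justifying the use of the product formula rather than the fixed-$N$ functional. That is fine because every entry has degree at most $q\le m$, where the two functionals agree.
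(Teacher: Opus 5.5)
Your proof is correct, but for the matrix sub-blocks it takes a different route from the paper. The paper does not condition. It absorbs the pattern projector into the test polynomial, $f=\chi_{U,V}p$, notes $\deg f\le s+d\le q-1$, and writes $\langle|f|^2\rangle_\lambda=c^\dagger(M_A\otimes M_B)c$. Here $M_A$ and $M_B$ are the order-$(q-1)$ occupation-correlation matrices of the two \emph{full} shells, with parameters $(\lambda,m)$ and $(m-\lambda,m)$. Slice positivity is then invoked once, with degree $q-1$, and its hypotheses $2(q-1)\le m$ and $q-1\le\lambda\le m-q+1$ are exactly the central interval.

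You instead use \cref{lem:conditioning} to write each sub-block as $\langle\chi_{U,V}\rangle_\lambda$ times the Gram matrix of conditioned shells. These shells have parameters $(\lambda-u_A,\,m-u_A-v_A)$ and $(m-\lambda-u_B,\,m-u_B-v_B)$. You then apply \cref{lem:slice-positive} only at the smaller degree $d=\lfloor(q-s)/2\rfloor$. This is the general-$q$ version of the paper's own $q=3$ treatment of the repeated-index $E_3$ and $F_3$ blocks ("requiring an electron in $s$ removes one orbital and one pseudo particle"). Your bookkeeping checks out: $d+u_A,\ d+v_A,\ d+u_B,\ d+v_B\le q-d\le q-1$, and $2d+u_A+v_A\le q\le m$. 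Passing to a principal submatrix of the tensor product of the two degree-$d$ shell Gram matrices is also legitimate. One minor point: in your Case~2, $|U|+|V|\le q-2$ forces $d\ge1$, so your aside about $d=0$ is vacuous.

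As for what each route buys, the paper's argument is more uniform. With no extra work it yields a structural fact used in the proof of \cref{prop:classical}: any diagonal functional with positive semidefinite order-$(q-1)$ occupation-correlation matrices and nonnegative pattern probabilities on at most $q$ orbitals is full positive through order $q$. Your argument relies on the conditioning property of this particular hypergeometric pseudo-state, so it does not give that transfer statement. In exchange, it never needs correlation matrices of order higher than $d\le q/2$. It also makes the parameter requirement of each individual sub-block explicit: $d+u_A\le\lambda\le m-v_A-d$ in shell $A$, and analogously in shell $B$, all implied by the central interval.
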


\begin{proof}
By \cref{lem:car-sector} it suffices to show that every $(U,V)$ sub-block
is positive semidefinite.  Let $s=|U|+|V|$ and let $d$ be the largest $|P|$
allowed by \cref{eq:degree-budget}, so that $s+2d\le q$.

If $d=0$, the sub-block is the single probability
$\langle\chi_{U,V}\rangle_\lambda$ of a pattern on $s\le q$ orbitals.  By
\cref{eq:pseudo-values} it is a product of factors
$\lambda-j$ and $m-\lambda-j$ with $j\le s-1\le q-1$, divided by positive
numbers, and every such factor is nonnegative because
$q-1\le\lambda\le m-q+1$.

If $d\ge1$, let $p=\sum_{|P|\le d}c_Pn_P$.  All operators involved are
diagonal and commute, and $\chi_{U,V}^2=\chi_{U,V}$, so the quadratic form
is
\[
 \langle\chi_{U,V}|p|^2\rangle_\lambda=\langle|\chi_{U,V}p|^2\rangle_\lambda .
\]
The occupation-number polynomial $f=\chi_{U,V}p$ has degree at most
$s+d\le q-d\le q-1$.  Expand $f$ in products $n_{P_A}n_{P_B}$ with
$P_A\subseteq A$, $P_B\subseteq B$, and $|P_A|,|P_B|\le q-1$.
The total degree of $|f|^2$ is at most $2q-2\le m$, so its fixed-$N$
expectation equals the auxiliary product evaluation.  Thus, by
\cref{eq:pseudo}, $\langle|f|^2\rangle_\lambda=c^\dagger(M_A\otimes M_B)c$,
where $c$ is the coefficient vector and $M_A$, $M_B$ are the
occupation-correlation matrices of order $q-1$ of the two shells, with
pseudo particle numbers $\lambda$ and $m-\lambda$.  The central interval
gives $2(q-1)\le m$ and $q-1\le\lambda,\,m-\lambda\le m-q+1$, so
\cref{lem:slice-positive} with $d=q-1$ shows $M_A\succeq0$ and
$M_B\succeq0$.  Hence $M_A\otimes M_B\succeq0$ and the quadratic form is
nonnegative.
\end{proof}

The proof shows more generally that any diagonal functional with
positive semidefinite occupation-correlation matrices of order $q-1$ and
nonnegative pattern probabilities on at most $q$ orbitals is full positive
through order $q$; this is used in \cref{prop:classical}.

\section{Young--Fourier proof of slice positivity}
\label{app:slice}

We prove \cref{lem:slice-positive} with the exact normalization used in the
manuscript.  Only the commuting occupation numbers $n_1,\ldots,n_M$ of one
shell enter, so the argument is the classical one for $0/1$ variables.  All
products below are multilinearized with the relations $n_i^2=n_i$.  Let $\cP_{\le d}$ be the complex vector space of
multilinear polynomials of degree at most $d$, put
$S=n_1+\cdots+n_M$, and define the Hermitian form
\begin{equation}
 \langle f,g\rangle_\theta
 =\langle \overline f g\rangle_{\theta,M}.
 \label{eq:pseudo-inner-product}
\end{equation}
The factorial-moment definition gives the truncated slice identity
\begin{equation}
 \langle (S-\theta)h\rangle_{\theta,M}=0
 \qquad (\deg h\le 2d-1).
 \label{eq:truncated-slice-identity}
\end{equation}
Consequently,
$(S-\theta)\cP_{\le d-1}$ lies in the radical of
\cref{eq:pseudo-inner-product}.

Let $\{\chi_B:B\in\cB_{M,j}\}$ be the degree-$j$ top-set basis of
Filmus.  The basis is orthogonal for every exchangeable measure, and its
norm is reduced to that of
\begin{equation}
 \chi_j=\prod_{\ell=1}^{j}(n_{2\ell-1}-n_{2\ell})
\end{equation}
by Theorems 9 and 10 of Ref.~\cite{Filmus2016}.  The same orthogonality
calculation uses only exchangeability of the moments, and hence applies to
$\langle\cdot\rangle_{\theta,M}$.  Moreover,
\begin{equation}
 \chi_j^2
 =\prod_{\ell=1}^{j}
 \bigl[n_{2\ell-1}(1-n_{2\ell})
      +(1-n_{2\ell-1})n_{2\ell}\bigr].
\end{equation}
The product is the sum of $2^j$ disjoint occupation patterns, each fixing $j$
variables to one and $j$ variables to zero.  \Cref{lem:conditioning}
therefore gives the exact algebraic norm formula
\begin{equation}
 \langle\chi_B,\chi_B\rangle_\theta
 =c_B\,2^j
 \frac{\fall{\theta}{j}\fall{M-\theta}{j}}
      {\fall{M}{2j}},
 \qquad c_B>0.
 \label{eq:filmus-norm}
\end{equation}
For integer $\theta$ this is also the slice norm in Theorem 15 of
Ref.~\cite{Filmus2016}.

It remains to justify that these orthogonal directions exhaust the
non-null quotient of $\cP_{\le d}$.  If $\theta$ is an integer satisfying
$d\le\theta\le M-d$, positivity follows immediately from the uniform
probability measure on the Hamming slice of weight $\theta$; Theorem 15 of
Ref.~\cite{Filmus2016} also gives the corresponding top-set basis.  Suppose
therefore that $\theta$ is noninteger.  As a function on the Boolean cube,
$S-\theta$ is nowhere zero.  Multiplication by $S-\theta$ is consequently
injective in the Boolean function algebra, and in particular on
$\cP_{\le d-1}$.  Hence
\begin{align}
 \dim\frac{\cP_{\le d}}
 {(S-\theta)\cP_{\le d-1}}
 &=\sum_{t=0}^{d}\binom Mt-
   \sum_{t=0}^{d-1}\binom Mt
 =\binom Md.                                      \label{eq:quotient-dim}
\end{align}
The number of top-set basis elements of degrees $0,\ldots,d$ is also
\begin{equation}
 \sum_{j=0}^{d}
 \left[\binom Mj-\binom M{j-1}\right]
 =\binom Md,
\end{equation}
where $\binom M{-1}=0$.  Under
$d\le\theta\le M-d$, every norm in
\cref{eq:filmus-norm} with $j\le d$ is strictly positive for noninteger
$\theta$.  Their classes are therefore linearly independent and form an
orthogonal basis of the quotient in \cref{eq:quotient-dim}.  For every
$p\in\cP_{\le d}$ we may thus write, modulo the radical,
\begin{equation}
 p\equiv
 \sum_{j=0}^{d}\sum_{B\in\cB_{M,j}}
 \widehat p(B)\chi_B.
 \label{eq:young-expansion}
\end{equation}
Combining orthogonality with \cref{eq:filmus-norm} yields
\begin{align}
 \langle |p|^2\rangle_{\theta,M}
 ={}&\sum_{j=0}^{d}
 2^j\frac{\fall{\theta}{j}\fall{M-\theta}{j}}
          {\fall{M}{2j}}
 \sum_{B\in\cB_{M,j}}c_B|\widehat p(B)|^2
 \ge0.                                             \label{eq:pseudo-norm-decomposition}
\end{align}
This proves the first assertion of \cref{lem:slice-positive}, including
complex coefficients.  Finally, for every square-free monomial $n_T$ with $|T|\le M-1$,
\begin{align}
 \langle (S-\theta)n_T\rangle_{\theta,M}
 ={}&(|T|-\theta)
 \frac{\fall{\theta}{|T|}}{\fall{M}{|T|}}\nonumber\\
 &+(M-|T|)
 \frac{\fall{\theta}{|T|+1}}{\fall{M}{|T|+1}}
 =0.                                               \label{eq:slice-ideal-check}
\end{align}
which proves the second assertion by linearity.  The restriction
$|T|\le M-1$ is necessary: for $|T|=M$ the relation $n_i^2=n_i$ gives
$(S-\theta)n_T=(M-\theta)n_T$, whose value
$(M-\theta)\fall{\theta}{M}/M!$ is nonzero for noninteger $\theta$.

%

\end{document}